\DeclareMathOperator*{\argmin}{arg\,min}
\newcommand{\bbR}{\mathbb{R}}
\newcommand{\bbN}{\mathbb{N}}
\newcommand{\cH}{\mathcal{H}}
\newcommand{\cS}{\mathcal{S}}
\newcommand{\iid}{\stackrel{\mathrm{iid}}{\sim}}
\newcommand{\deq}{\stackrel{d}{=}}
\newcommand{\dneq}{\stackrel{d}{\neq}}
\newcommand{\bbP}{\mathbb{P}}
\newcommand{\bbE}{\mathbb{E}}
\newcommand{\bbV}{\mathbb{V}}
\newcommand{\cov}{\text{Cov}}
\newcommand{\cN}{\mathcal{N}}
\theoremstyle{definition}
\newtheorem{definition}{Definition}[section]
\newtheorem{proposition}{Proposition}[section]
\newtheorem{corollary}{Corollary}[section]
\newtheorem{lemma}{Lemma}[section]
\newtheorem{setting}{Setting}[section]
\newtheorem*{notation}{Notation}
\newtheorem*{notation2}{Additional notation}
\theoremstyle{plain}
\newtheorem{example}{Example}[section]
\theoremstyle{remark}
\newcommand{\bX}{\mathbf{X}}
\newcommand{\bY}{\mathbf{Y}}
\newcommand{\bI}{\mathbf{I}}
\newcommand{\cC}{\mathcal{C}}
\newcommand{\cD}{\mathcal{D}}
\newcommand{\cT}{\mathcal{T}}
\newcommand{\cL}{\mathcal{L}}
\newcommand{\cI}{\mathcal{I}}
\newcommand{\cK}{\mathcal{K}}
\newcommand{\fL}{\mathfrak{L}}
\begin{document}

\title{\bf Causal Change Point Detection and Localization}

\author[1]{Shimeng Huang}
\author[2]{Jonas Peters}
\author[1]{Niklas Pfister}
\affil[1]{Department of Mathematical Sciences, University of Copenhagen, Copenhagen, Denmark}
\affil[2]{Department of Mathematics, ETH Z\"urich, Z\"urich, Switzerland}

\date{\today}

\maketitle

\begin{abstract}
  Detecting and localizing change points in sequential data is of interest in many
areas of application. Various notions of change points have been proposed, such
as changes in mean, variance, or the linear regression coefficient. In this
work, we consider settings in which a response variable $Y$ and a set of
covariates $X=(X^1,\ldots,X^{d+1})$ are observed over time and aim to find
changes in the causal mechanism generating $Y$ from $X$. More specifically, we
assume $Y$ depends linearly on a subset of the covariates and aim to determine
at what time points either the dependency on the subset or the subset itself
changes. We call these time points causal change points (CCPs) and show that
they form a subset of the commonly studied regression change points. We propose
general methodology to both detect and localize CCPs. Although motivated by
causality, we define CCPs without referencing an underlying causal model. The
proposed definition of CCPs exploits a notion of invariance, which is a purely
observational quantity but -- under additional assumptions -- has a causal
meaning. For CCP localization, we propose a loss function that can be combined
with existing multiple change point algorithms to localize multiple CCPs
efficiently. We evaluate and illustrate our methods on simulated datasets.

\end{abstract}

\section{Introduction}
Change point detection (i.e., testing the existence of change points) and
localization (i.e., estimating the location of change points) have been of
interest for several decades dating back to \citet{page1954continuous,
page1955test}. We consider an offline setting where we have a sequence of
independent observations $(X_1,Y_1),\ldots,(X_n,Y_n)$ with covariates
$X_i\in\mathbb{R}^{d+1}$ and a response $Y_i\in\mathbb{R}$. For all
$i\in\{1,\ldots,n\}$, denote by $\bbP_i^{X,Y}$ the joint distribution of
$(X_i,Y_i)$, which may change across $i$. We call a time point
$k\in\{2,\ldots,n\}$ a \emph{change point} if the joint distributions at time
points $k$ and $k-1$ differ, that is, if $\bbP_k^{X,Y} \neq \bbP_{k-1}^{X,Y}$
\citep[see also][]{darkhovsky1993non}. Instead of considering general change
points as defined above, one may consider a more restrictive definition of
change points, e.g., time points where there is a change in mean, variance or
conditional distribution.  Depending on the application at hand, certain types
of change points may be more relevant than others. In many applications, the
goal is to detect or localize changes in the relationship between the covariates
and the response.

In economics and other fields, ``structural changes", that is, changes in
regression models, have been extensively studied over the last few decades.
These include linear and nonlinear regression models, as well as non-parametric
regression models. Under linear regression settings, \citet{bai1996testing} and
\citet{perron2020testing} propose tests for detecting changes in the regression
parameter and the residual distribution; \citet{hansen2000testing} proposes a
test that detects changes in the regression parameter while allowing for changes
in the marginal distribution of the covariates; \citet{bai1997estimation}
considers localizing one structural change while allowing lagged and trending
covariates, and \citet{bai1997estimating} and \citet{bai1998estimating,
bai2003computation} analyze the estimation of multiple change points.
\citet{andrews1993tests} considers testing an unknown change point in part of
the parameter vector in nonlinear regression models. Testing for changes in
nonparametric regression models has been considered by,
e.g.,~\citet{orvath2002change}. In recent years, structural changes in
high-dimensional regression models have also been studied
\citep[e.g.,][]{leonardi2016computationally, wang2021statistically}. Reviews are
provided by \citet{aue2013structural}, for example, who focus on methods
detecting structural change that allow for serial dependence;
\citet{truong2020selective} consider algorithms that can be characterized by a
cost function, a search method, and a constraint on the number of changes.

By definition, structural changes refer to changes in the conditional
distribution of $Y$ given all covariates $X$. While in many applications such
changes are useful, it may also be of interest to have some type of mechanistic
understanding of the changes in order to assess their relevance. For example, if
we assume there is an underlying causal model generating the distribution over
the variables $(X, Y)$, then a structural change, that is, a change in the
conditional distribution of $Y$ given $X$, can have different causal
explanations: It could either indicate a change in causal relationship between
$Y$ and $X$, or it could merely correspond to shifts in the distribution of $X$
that do not affect the causal dependence of $Y$ on $X$. The ability to
distinguish between such changes can be useful in many applications as it allows
practitioners to pay particular attention to the more fundamental changes. In
this work, we characterize these changes based on reversing the idea of causal
invariance --- also known as autonomy or modularity
\citep[e.g.,][]{haavelmo1944probability, aldrich1989autonomy} --- which gives
the change points a causal interpretation under a causal model but is still
meaningful otherwise. The idea of causal invariance has been used in invariant
causal prediction proposed by \citet{peters2016causal} and its sequential
counterpart \citep{pfister2019invariant} for discovering the causal predictors
of a response variable, where the conditional distribution of the response
variable given its causal predictors is assumed to be unchanged across
environments (respectively, time). Our paper shows that this idea also proves
useful when detecting and localizing change points. To our knowledge, detecting
or localizing change points that can have a causal interpretation have not been
studied with one exception on detecting local causal mechanism changes in DAGs
in \citet{huang2020causal}, where the definition of the causal mechanism is
different from ours, specifically they assume that the parent set of each node
is fixed. 

\begin{notation}
We observe a sequence of independent observations $(X_1,Y_1),\ldots,(X_n,Y_n)$
with covariates $X_i\in\mathbb{R}^{d+1}$ and a response $Y_i\in\mathbb{R}$. To
avoid explicitly stating intercepts, we assume $X_i^{d+1} = 1$ for all $i
\in\{1,\ldots,n\}$, and let $\cS \coloneqq \big\{S\subseteq\{1,\ldots,d+1\}\mid
d+1\in S\big\}$. We let $\cI$ be the set of all subsets of $\{1,\ldots,n\}$ that
are sequences of consecutive indices of length greater than or equal to $2$
which we refer to as ``intervals''. For all $S\in\cS$ we denote $X_i^S \in
\bbR^{|S|}$ as the column vector of covariates $(X_i^j)_{j\in S}$ (sorted in
ascending order of the indices). We denote by $\bX\coloneqq
(X_1,\ldots,X_n)^{\top}\in\mathbb{R}^{n\times d}$ and $\bY\coloneqq
(Y_1,\ldots,Y_n)^{\top}\in\mathbb{R}^{n\times 1}$ the design matrix of the
covariates and the matrix of responses, respectively. For all $I\in\cI$, we
denote by $\bX_I$ and $\bY_I$ the submatrices formed by the rows of $\bX$ and
$\bY$ indexed by $I$, respectively (sorted in ascending order of the indices),
and additionally for all $S\in\cS$, $\bX_I^S$ denotes the submatrix of $\bX$
formed by the rows indexed by $I$ and columns indexed by $S$.
\end{notation}

This paper is organized as follows. In Section~\ref{sec:cp_and_ccp}, we define
regression change points and causal change points. Section~\ref{sec:ccpd}
focuses on the detection problem and introduces a simple procedure. In
Section~\ref{sec:ccpl}, we consider the localization problem and propose two
different methods: one that tests candidates and one that minimizes a loss
function. Numerical experiments are given in Section~\ref{sec:experiments}.

\section{Regression change points and causal change points}
\label{sec:cp_and_ccp} 

To distinguish between structural and causal changes, we first formally define
the time points of structural changes below, which we call \emph{regression
change points}.

\begin{definition}[Regression change point (RCP)] \label{def:rcp} For all $i \in
\{1,\ldots,n\}$, assume that $\bbE[X_iX_i^{\top}]$ is invertible and define the
population ordinary least squares (OLS) coefficient as
$\beta_i^{\operatorname{OLS}} \coloneqq \bbE[X_iX_i^\top]^{-1}\bbE[X_iY_i]$ and
the corresponding residual as $\epsilon_i \coloneqq Y_i -
X_i^\top\beta_i^{\operatorname{OLS}}$. Then, a time point $k \in \{2,\ldots,n\}$
is called a \emph{regression change point} (RCP) if
\begin{equation*}
\text{either} \quad 
  \beta_k^{\operatorname{OLS}}\neq\beta_{k-1}^{\operatorname{OLS}}
  \quad\text{or}\quad
  \epsilon_k \dneq \epsilon_{k-1}.
\end{equation*}
\end{definition}

While we do not assume that the conditional mean of $Y$ given $X$ is linear, the
definition of RCPs implies that if $I\in\mathcal{I}$ is an interval without an
RCP, then there exists a vector $\beta\in\mathbb{R}^{d+1}$ and a distribution
$F_{\epsilon}$ such that for all $i\in I$ it holds that
\begin{equation*}
Y_i = X_i^{\top}\beta + \epsilon \quad \text{and} \quad 
\bbE[X_i\epsilon] = 0,
\end{equation*}
with $\epsilon \sim F_{\epsilon}$
and $\beta_i^{\operatorname{OLS}}=\beta$. 

RCPs characterize changes in the conditional mean model. However, even though
these changes are sometimes interpreted as a proxy for a change in causality, it
is well-known that this interpretation can be misleading. The following example
illustrates this.

\begin{example}[RCPs in linear SCMs] \label{ex:lscm} Let $\{1,\ldots,n\}$ be
partitioned into three disjoint time intervals $I_1=\{1,\ldots,k_1-1\}$,
$I_2=\{k_1,\ldots,k_2-1\}$ and $I_3=\{k_2,\ldots,n\}$. For all
$i\in\{1,\ldots,n\}$ consider the linear structural causal model (SCM), see also
Section~\ref{sec:causal_models}, over the variables $(X_i^1, X_i^2, X_i^3,
X_i^4, Y_i)$ given by $X_i^4\coloneqq 1$ as the intercept and
\begin{subequations}
\begin{align}
X_i^{S} &\coloneqq A_i X_i + \alpha_i Y + \epsilon_i^X \label{eq:ex1cov} \\
Y_i &\coloneqq \beta_i^{\top} X_i + \epsilon_i^Y, \label{eq:ex1causal}
\end{align}
\end{subequations}
where $S = \{1,2,3\}$, $\epsilon_i^X = (\epsilon_i^{X^1}, \epsilon_i^{X^2},
\epsilon_i^{X^3})$ and $\epsilon_i^{Y}$ are jointly independent noise vectors
with mean zero, $A_i\in\mathbb{R}^{3\times 4}$, $\beta_i\in\mathbb{R}^{4}$ and
$\alpha_i\in\mathbb{R}^3$ are parameters such that the SCM induces the graphs in
Figure~\ref{fig:ex_lscm} (the intercept variable $X_i^4$ is omitted). The
specific values for the parameters $A_i$, $\beta_i$ and $\alpha_i$, as well as
the variances of $\epsilon_i^X$ and $\epsilon_i^Y$ for $i \in \{1,\ldots,n\}$
are given in Appendix~\ref{app:ex_details}.
\begin{figure}[t]
\centering
\resizebox{0.95\textwidth}{!}{
\begin{tikzpicture}
  
  \node[state] at (0, 0) (X1) {$X^1_i$};
  \node[below=of X1, state] (Y) {$Y$};
  \node[left=of Y, state] (X2) {$X^2_i$};
  \node[right=of Y, state] (X3) {$X^3_i$};
  \draw[-Latex] (X1) -- (Y);
  \draw[-Latex] (X2) -- (Y);
  \draw[-Latex] (Y) -- (X3);
  \node[] at (0, -3){$i \in I_1$};
  
  \node[state] at (6, 0) (X1) {$X^1_i$};
  \node[above right=0.05px of X1] () {\faLegal};
  \node[below=of X1, state] (Y) {$Y$};
  \node[left=of Y, state] (X2) {$X^2_i$};
  \node[right=of Y, state] (X3) {$X^3_i$};
  \node[above right=0.05px of X3] () {\faLegal};
  \draw[-Latex] (X1) -- (Y);
  \draw[-Latex] (X2) -- (Y);
  \draw[-Latex] (Y) -- (X3);
  \node[] at (6, -3){$i \in I_2$};
  
  \node[state] at (12, 0) (X1) {$X^1_i$};
  \node[below=of X1, state] (Y) {$Y$};
  \node[left=of Y, state] (X2) {$X^2_i$};
  \node[right=of Y, state] (X3) {$X^3_i$};
  \node[above right=0.05px of X3] () {\faLegal};
  \node[above right=0.05px of Y] () {\faLegal};
  \draw[-Latex] (X1) -- (X3);
  \draw[-Latex] (X2) -- (Y);
  \draw[-Latex] (Y) -- (X3);
  \node[] at (12, -3){$i \in I_3$};
\end{tikzpicture}} 
\resizebox*{\textwidth}{!}{
  \begin{tikzpicture}
    \draw (-9,-3.75) -- (9,-3.75); 
    \foreach \x in
    {-9, -8, -7, -6, -5, -4, -3, -2, -1, 0, 1, 2, 3, 4, 5, 6, 7, 8, 9}
      \draw (\x cm,-3.75cm+3pt) -- (\x cm,-3.75cm-3pt); 
    \foreach \x in {-9,-3,3,9} 
      \draw[ultra thick] (\x cm,-3.75cm+5pt) -- (\x cm,-3.75cm-5pt); 
    \draw (-9,-3.75) node[below=5pt] {$1$}; 
    \draw (-3,-3.75) node[below=5pt] {$k_1$}; 
    \draw (3,-3.75) node[below=5pt] {$k_2$}; 
    \draw (9,-3.75) node[below=5pt] {$n$}; 
  \end{tikzpicture}
} \caption{ Illustration of the data generating model in Example~\ref{ex:lscm}
rolled out over time. The model remains fixed between $1$ and $k_1-1$, between
$k_1$ and $k_2-1$ and between $k_2$ and $n$. We intervene at two time points
$k_1$ and $k_2$, and the hammers indicate on which node these interventions act
with respect to the previous time interval. The population OLS coefficient
$\beta^{\operatorname{OLS}}_i$ changes at both time points $k_1$ and $k_2$ due
to the interventions (see details in Appendix~\ref{app:ex_details}). However,
only at $k_2$ the causal mechanism of $Y$ changes (at $k_1$ the causal mechanism
of $Y$ remains unchanged).}
\label{fig:ex_lscm}
\end{figure}
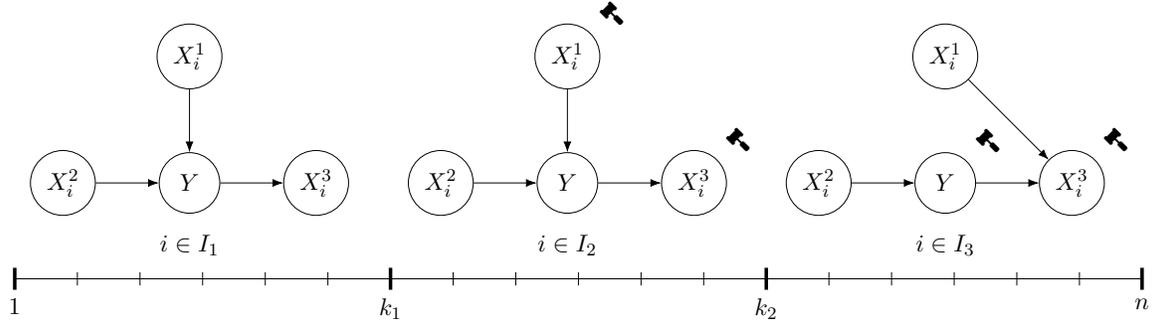

In this example, at both time points $k_1$ and $k_2$, the joint distribution of
$(X_i,Y_i)$ and in particular the population OLS parameter
$\beta^{\operatorname{OLS}}_i$ changes (i.e., $k_1$ and $k_2$ are both RCPs by
Definition~\ref{def:rcp}). However, the causal mechanism of the response $Y$
with respect to $X$ as specified in \eqref{eq:ex1causal} only changes at $k_2$.
Our proposed notion of causal change points defined in Definition~\ref{def:ccp}
below is able to capture this distinction. This example also highlights the
invariance property of causal models: Interventions that do not act directly on
the response $Y_i$ may change $Y_i\mid X_i$ but they keep $Y_i\mid
X_i^{\operatorname{PA}(Y_i)}$ invariant, where
$\operatorname{PA}(Y_i)\subseteq\{1,\ldots,4\}$ denotes the causal parents of
$Y_i$. A more formal treatment of causal models is provided in
Section~\ref{sec:causal_models}.
\end{example}

Example~\ref{ex:lscm} shows that it is possible that the conditional expectation
of $Y$ given $X$ can change even though the causal mechanism of how $Y$ is
affected by $X$ remains fixed. We propose to distinguish between changes only in
the full conditional expectation of $Y$ given $X$ and changes that manifest in
differences in the conditional expectations of $Y$ given $X^S$ for all
$S\in\cS$. Arguably, the second notion of change is of a more fundamental nature
and indicates a more drastic shift in the data generating process. To formalize
this notion which we call causal change points (see Definition~\ref{def:ccp}
below) we first define the population OLS coefficient and the corresponding
residuals based on subsets of covariates.

\begin{definition}[Population OLS given subsets of covariates]
\label{def:pop_ols} Assume that $\bbE[X_iX_i^{\top}]$ is invertible for all
$i\in\{1,\ldots,n\}$. For all $S\in\cS$ and all $i\in\{1,\ldots,n\}$, the
\emph{population OLS coefficient given $S$} is defined as
$\beta_i^{\text{OLS}}(S) \in \bbR^{d+1}$ satisfying
\begin{equation*}
\beta_i^{\text{OLS}}(S)^S = \bbE\left[X_i^S(X_i^S)^\top\right]^{-1}
\bbE\left[X_i^SY_i\right] 
\end{equation*} 
and $\beta_i^{\text{OLS}}(S)^j = 0$ for all $j\in\{1,\ldots,d+1\}\setminus S$.
The corresponding \emph{population OLS residual given $S$}  is defined as
$\epsilon_i(S) \coloneqq Y_i - X_i^\top\beta_i^{\operatorname{OLS}}(S)$. We use
the convention that
$\beta_i^{\text{OLS}}=\beta_i^{\text{OLS}}(\{1,\ldots,d+1\})$ and
$\epsilon_i=\epsilon_i(\{1,\ldots,d+1\})$. 
\end{definition}

Using this definition, we can now define what we call causal change points, the
time points at which for all subsets of covariates $S\in\cS$, either the
population OLS coefficient given $S$ or the distribution of the population OLS
residual given $S$ differs from previous time points (see
Definition~\ref{def:ccp}). Even though we call these changes ``causal'', the
definition does not rely on an underlying causal model. Nevertheless, the
definition of causal change points is motivated by the fact that under
additional causal assumptions, they correspond to changes in the causal
mechanism of $Y$ on $X$. We discuss this connection in
Section~\ref{sec:causal_models}.

\begin{definition}[Causal change point (CCP)] \label{def:ccp} A time point $k
\in \{2,\ldots,n\}$ is called a \emph{causal change point} (CCP) if for all
$S\in\cS$
\begin{equation*}
  \text{either}\quad\beta_k^{\operatorname{OLS}}(S)\neq\beta_{k-1}^{\operatorname{OLS}}(S)
  \quad\text{or}\quad
  \epsilon_k(S) \dneq \epsilon_{k-1}(S).
\end{equation*}
\end{definition}

By definition, CCPs form a subset of RCPs. We refer to RCPs that are not CCPs as
\emph{non-causal change points} (NCCPs). An alternative way to characterize CCPs
is via sets $S\in\cS$ for which the population OLS coefficient and residual
distribution given $S$ remain unchanged within a time interval.

\begin{definition}[Invariant set] \label{def:invsets} For a time interval
$I\in\cI$, a set $S\in\cS$ is called an $I$-\emph{invariant set} if there exists
a vector $\beta\in\mathbb{R}^{d+1}$ and a distribution $F$ such that for all
$i\in I$, 
\begin{equation*}
  \beta_i^{\operatorname{OLS}}(S)=\beta
  \quad\text{and}\quad
  \epsilon_i(S)\iid F.
\end{equation*}
\end{definition}

The following proposition characterizes CCPs in terms of invariant
sets. The proof, given in Appendix~\ref{app:proofs} for completeness,
follows directly from the definitions.

\begin{proposition}[Alternative characterization of CCP]
\label{prop:ccp_alt_char} A time point $k \in \{2,\ldots, n\}$ is a CCP if and
only if there does not exist a $\{k-1,k\}$-invariant set $S\in\cS$.
\end{proposition}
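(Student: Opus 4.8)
The plan is to establish the biconditional as a short chain of logical equivalences obtained by unwinding Definitions~\ref{def:ccp} and~\ref{def:invsets}; there is no analytic content beyond rephrasing the two-point invariance condition.

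First I would unwind invariance on the two-point interval $\{k-1,k\}$. Since the observations $(X_1,Y_1),\ldots,(X_n,Y_n)$ are independent, the residuals $\epsilon_{k-1}(S)$ and $\epsilon_k(S)$ are automatically independent, so the ``iid'' requirement in Definition~\ref{def:invsets} reduces to the single distributional condition $\epsilon_{k-1}(S)\deq\epsilon_k(S)$. Taking $\beta$ and $F$ in Definition~\ref{def:invsets} to be the common coefficient and the common residual law, this shows that $S\in\cS$ is $\{k-1,k\}$-invariant if and only if both $\beta_{k-1}^{\operatorname{OLS}}(S)=\beta_k^{\operatorname{OLS}}(S)$ and $\epsilon_{k-1}(S)\deq\epsilon_k(S)$ hold.

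Next I would compare this with Definition~\ref{def:ccp} set by set. For a fixed $S$, the CCP disjunction ``$\beta_k^{\operatorname{OLS}}(S)\neq\beta_{k-1}^{\operatorname{OLS}}(S)$ or $\epsilon_k(S)\dneq\epsilon_{k-1}(S)$'' is, by De Morgan's law, exactly the negation of the conjunction characterizing $\{k-1,k\}$-invariance of $S$. Hence the CCP disjunction holds for $S$ if and only if $S$ is not $\{k-1,k\}$-invariant.

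Finally I would apply the outer quantifier over $\cS$. By Definition~\ref{def:ccp}, $k$ is a CCP precisely when the disjunction holds for all $S\in\cS$, which by the previous step is equivalent to no $S\in\cS$ being $\{k-1,k\}$-invariant; this is the claimed statement. The argument is purely logical once invariance is rephrased, so the only point requiring any care is the reduction of the ``iid'' condition to equality in distribution on the two-point interval, which rests on the standing independence of the observations.
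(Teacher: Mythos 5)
Your proposal is correct and follows essentially the same route as the paper's proof: both simply unwind Definitions~\ref{def:ccp} and~\ref{def:invsets}, negate the per-set conjunction, and push the quantifier over $\cS$ through. Your extra remark that independence of the observations reduces the ``iid'' clause to equality in distribution on a two-point interval is a fine point of care that the paper leaves implicit.
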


In the following section, we discuss how CCPs relate to changes of causal
mechanism when assuming an underlying causal model.

\subsection{Causal models as data generating models}\label{sec:causal_models}

We now formalize changes in causal mechanisms and relate them to CCPs. To this
end, we introduce a class of SCMs \citep[e.g.,][]{pearl2009causality} that
satisfy the assumptions of our sequential model, and discuss under which
additional causal assumptions, CCPs correspond to causal mechanism changes.
Furthermore, we argue in Example~\ref{ex:lscm2} that even if these assumptions
are not satisfied, CCPs capture meaningful changes.

\begin{setting}[Sequential linear SCM with hidden variables]
\label{set:seq_lscm_with_hidden} Let
$(X_1,Y_1),\ldots,(X_n,Y_n)\in\bbR^{d+1}\times\bbR$ be a sequence of observed
variables and $(H_1,\ldots,H_n)\in\bbR^{q}$ a sequence of unobserved variables.
For all $i\in\{1,\ldots,n\}$ consider an SCM over $(H_i, X_i, Y_i)$ given by
$X_i^{d+1} \coloneqq 1$ as intercept and
\begin{subequations}
\begin{align}
X_i^{S^\star} &\coloneqq A_iX_i + \alpha_i Y_i + h_i(H_i,\epsilon_i^X)  \label{eq:seq_lscm_x} \\
Y_i &\coloneqq \beta_i^\top X_i + g_i(H_i,\epsilon_i^Y), \label{eq:seq_lscm_y}
\end{align}
\end{subequations}
where $S^\star = \{1,\ldots,d\}$, $H_i$, $\epsilon_i^X$ and $\epsilon_i^Y$ are
jointly independent, $g_i$ and $h_i$ are arbitrary measurable functions such
that $\bbE[h_i(H_i,\epsilon_i^Y)]=\bbE[g_i(H_i,\epsilon_i^Y)]=0$. Furthermore,
the parameters in~\eqref{eq:seq_lscm_x} and~\eqref{eq:seq_lscm_y} are such that
for all $i \in \{1, \ldots, n\}$ the induced graph\footnote{For all time points
$i\in\{1,\ldots,n\}$ the graph is constructed by taking the observed variables
$X_i^1,\ldots, X_i^4, Y_i$ as nodes and adding a directed edge from node $V$ to
$W$ if variable $V$ appears with a non-zero coefficient in the structural
equation of variable $W$.} is directed and acyclic. For all
$i\in\{2,\ldots,n-1\}$ the set of (observed) parent variables of $Y_i$ is given
by $\text{PA}(Y_i) = \{j\in \{1,\ldots,d+1\}\mid \beta_i^j \neq 0\}$.
\end{setting}
 
Given such a causal model, we can characterize what CCPs correspond to under
certain conditions. In Proposition~\ref{prop:ccp_under_seq_lscm_with_hidden}, we
show that as long as the noise term of $Y$ remains uncorrelated with its
parents, a CCP indicates a change in either the causal coefficient $\beta_i$ or in
the noise term $g_i(H_i,\epsilon_i^Y)$. 

\begin{proposition}
\label{prop:ccp_under_seq_lscm_with_hidden} Assume
Setting~\ref{set:seq_lscm_with_hidden}, let $k\in\{2,\ldots,n\}$ be a fixed time
point and assume that for all $i \in \{k-1,k\}$ it holds that
\begin{equation}\label{eq:no_confound}
  \bbE[X_i^{\text{PA}(Y_{i})}g_i(H_i, \epsilon_i^Y)]=0.
\end{equation}
Then, it holds that
\begin{equation*}
k\text{ is a CCP}
\quad\Longrightarrow\quad
\beta_k \neq \beta_{k-1}\text{ or } g_k(H_k,\epsilon_k^Y) \dneq g_{k-1}(H_{k-1},\epsilon_{k-1}^Y).
\end{equation*}
\end{proposition}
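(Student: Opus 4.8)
The plan is to prove the contrapositive via Proposition~\ref{prop:ccp_alt_char}. Suppose $\beta_k = \beta_{k-1} =: \beta$ and $g_k(H_k,\epsilon_k^Y) \deq g_{k-1}(H_{k-1},\epsilon_{k-1}^Y)$; I would then exhibit a $\{k-1,k\}$-invariant set $S\in\cS$, which by Proposition~\ref{prop:ccp_alt_char} shows that $k$ is not a CCP. The natural candidate is $S \coloneqq \text{PA}(Y_k)\cup\{d+1\}$. It lies in $\cS$ because it contains the intercept index $d+1$, and since $\beta_k=\beta_{k-1}$ the two coefficient vectors share the support $\{j:\beta^j\neq 0\}=\text{PA}(Y_k)\subseteq S$, so that $\beta^\top X_i = (\beta^S)^\top X_i^S$ for every $i$.

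Next I would compute $\beta_i^{\operatorname{OLS}}(S)$ for $i\in\{k-1,k\}$. Substituting $Y_i = (\beta^S)^\top X_i^S + g_i(H_i,\epsilon_i^Y)$ into the defining formula of Definition~\ref{def:pop_ols} gives
\[
\beta_i^{\operatorname{OLS}}(S)^S = \bbE\big[X_i^S(X_i^S)^\top\big]^{-1}\Big(\bbE\big[X_i^S(X_i^S)^\top\big]\beta^S + \bbE\big[X_i^S g_i(H_i,\epsilon_i^Y)\big]\Big).
\]
The crux is to show $\bbE[X_i^S g_i(H_i,\epsilon_i^Y)]=0$. For the coordinates indexed by $\text{PA}(Y_k)$ this is precisely the no-confounding assumption~\eqref{eq:no_confound}, while for the intercept coordinate $d+1$ it follows from $X_i^{d+1}=1$ together with the mean-zero condition $\bbE[g_i(H_i,\epsilon_i^Y)]=0$ of Setting~\ref{set:seq_lscm_with_hidden}. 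Hence $\beta_i^{\operatorname{OLS}}(S)^S=\beta^S$, and since $\beta_i^{\operatorname{OLS}}(S)$ vanishes off $S$ while $\beta$ vanishes off its support $\subseteq S$, I conclude $\beta_i^{\operatorname{OLS}}(S)=\beta$ for both $i\in\{k-1,k\}$.

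The residuals then collapse to the structural noise: $\epsilon_i(S) = Y_i - X_i^\top\beta_i^{\operatorname{OLS}}(S) = Y_i - X_i^\top\beta = g_i(H_i,\epsilon_i^Y)$. By the contrapositive hypothesis $\epsilon_k(S)\deq\epsilon_{k-1}(S)$, and independence across $i$ follows from the independence of the observations (equivalently of the noise terms), so $\epsilon_i(S)\iid F$ for their common law $F$. Combined with $\beta_i^{\operatorname{OLS}}(S)=\beta$, this establishes that $S$ is a $\{k-1,k\}$-invariant set, and Proposition~\ref{prop:ccp_alt_char} then yields that $k$ is not a CCP. I expect the main obstacle to be exactly the cross-moment identity $\bbE[X_i^S g_i(H_i,\epsilon_i^Y)]=0$: assumption~\eqref{eq:no_confound} only controls the parent coordinates, so one must separately invoke the mean-zero property to absorb the intercept coordinate that is appended to form $S\in\cS$, while simultaneously tracking the support of $\beta$ relative to $S$ so that the OLS coefficient recovers $\beta$ exactly rather than merely agreeing on $S$.
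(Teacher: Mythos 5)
Your proof is correct and follows essentially the same route as the paper's: prove the contrapositive, show via the no-confounding condition that the population OLS coefficient given the parent set recovers the causal coefficient $\beta_i$ and that the corresponding residual equals $g_i(H_i,\epsilon_i^Y)$, and conclude via Proposition~\ref{prop:ccp_alt_char}. Your only deviation is explicitly appending the intercept index $d+1$ to form $S=\text{PA}(Y_k)\cup\{d+1\}\in\cS$ and absorbing that coordinate with the mean-zero condition on $g_i$ --- a careful touch, since the paper uses $\text{PA}(Y_k)$ directly, which strictly speaking need not lie in $\cS$ when $\beta_i^{d+1}=0$.
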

A proof is given in Appendix~\ref{app:proofs}.  In the following example, we
illustrate that the statement of
Proposition~\ref{prop:ccp_under_seq_lscm_with_hidden} may be false if there is
hidden confounding in the sense that \eqref{eq:no_confound} is violated.

\begin{example}[CCPs with hidden confounding] \label{ex:lscm2} For all
$i\in\{1,\ldots,n\}$, consider the linear SCM over the variables
$(H_i,X_i^1,X_i^2,Y_i)$ given by $X_i^2\coloneqq 1$ as the intercept and
\begin{subequations}
\begin{align}
  X_i^1 &\coloneqq \alpha_i H_i + \epsilon_i^{X^1} \label{eq:ex_x} \\
  Y_i &\coloneqq X_i^1 + H_i + \epsilon_i^Y, \label{eq:ex_y}
\end{align}  
\end{subequations}
where $\epsilon_i^{X^1}, \epsilon_i^Y\iid\cN(0,1)$ for all $i\in\{1,\ldots,n\}$,
$H_i\iid\cN(0,1)$ for all $i\in\{1,\ldots,k_1-1\}$, $H_i\iid\cN(0,2)$ for all
$i\in\{k_1,\ldots,n\}$, $\alpha_i = 1$ for all $i\in\{1,\ldots,k_2-1\}$, and
$\alpha_i = 0$ for all $i\in\{k_2,\ldots,n\}$. For all $i\in\{1,\ldots,n\}$, the
variable $H_i$ is unobserved. Define the intervals $I_1 = \{1,\ldots,k_1-1\}$,
$I_2 = \{k_1,\ldots, k_2-1\}$, and $I_3 = \{k_2, \ldots,n\}$. The corresponding
DAGs are shown in Figure~\ref{fig:ex_lscm2}.

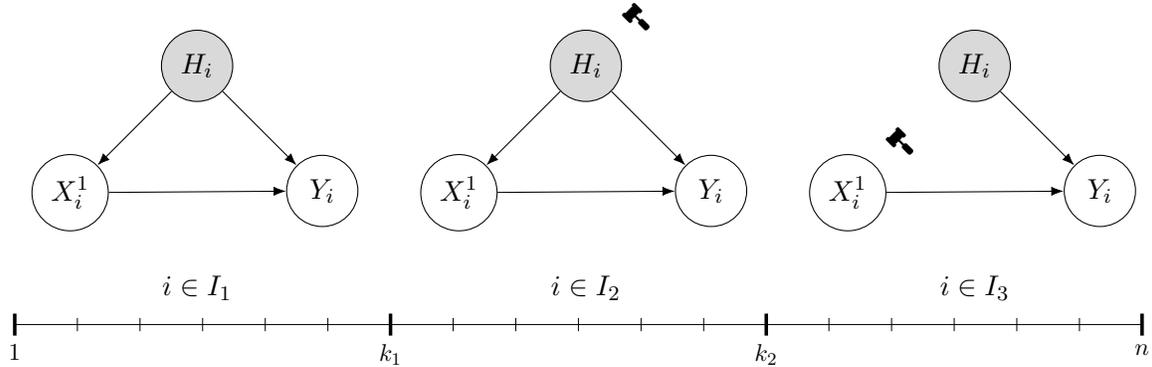
\begin{figure}[t]
\centering
\resizebox{0.95\textwidth}{!}{
\begin{tikzpicture}
  
  \node[state, fill=gray!30] at (0, 0) (H) {$H_i$};
  \node[below left=of H, state] (X) {$X_i^1$};
  \node[below right=of H, state] (Y) {$Y_i$};
  \draw[-Latex] (H) -- (X);
  \draw[-Latex] (H) -- (Y);
  \draw[-Latex] (X) -- (Y);
  \node[] at (0, -3){$i \in I_1$};
  
  \node[state, fill=gray!30] at (5.25, 0)(H) {$H_i$};
  \node[above right=0.05px of H] () {\faLegal};
  \node[below left=of H, state] (X) {$X_i^1$};
  \node[below right=of H, state] (Y) {$Y_i$};
  \draw[-Latex] (H) -- (X);
  \draw[-Latex] (H) -- (Y);
  \draw[-Latex] (X) -- (Y);
  \node[] at (5.25, -3){$i \in I_2$};
  
  \node[state, fill=gray!30] at (10.5, 0)(H) {$H_i$};
    \node[below left=of H, state] (X) {$X_i^1$};
    \node[below right=of H, state] (Y) {$Y_i$};
    \draw[-Latex] (H) -- (Y);
    \draw[-Latex] (X) -- (Y);
    \node[above right=0.05px of X] () {\faLegal};
    \node[] at (10.5, -3){$i \in I_3$};
\end{tikzpicture}}
\resizebox*{\textwidth}{!}{
\begin{tikzpicture}
  \draw (-9,-3.75) -- (9,-3.75); 
  \foreach \x in
  {-9, -8, -7, -6, -5, -4, -3, -2, -1, 0, 1, 2, 3, 4, 5, 6, 7, 8, 9}
    \draw (\x cm,-3.75cm+3pt) -- (\x cm,-3.75cm-3pt); 
  \foreach \x in {-9,-3,3,9} 
    \draw[ultra thick] (\x cm,-3.75cm+5pt) -- (\x cm,-3.75cm-5pt); 
  \draw (-9,-3.75) node[below=5pt] {$1$}; 
  \draw (-3,-3.75) node[below=5pt] {$k_1$}; 
  \draw (3,-3.75) node[below=5pt] {$k_2$}; 
  \draw (9,-3.75) node[below=5pt] {$n$}; 
\end{tikzpicture}
} \caption{Illustration of the data generating model in Example~\ref{ex:lscm2}
rolled out over time. The model remains fixed between $1$ and $k_1-1$, between
$k_1$ and $k_2-1$ and between $k_2$ and $n$. We intervene at two time points
$k_1$ and $k_2$, and the hammers indicate on which node these interventions act
with respect to the previous time interval. Even though both $k_1$ and $k_2$ are
CCPs, the causal mechanism of $Y$ with respect to $X$ only changes at $k_1$ (the
noise term $H_i + \epsilon_i^Y$ changes in distribution) but not at $k_2$
(neither the causal coefficient nor the noise term's distribution changes).} 
\label{fig:ex_lscm2}
\end{figure}

Here, both $k_1$ and $k_2$ are CCPs. To see this, consider the population OLS
parameter given $S_1 = \{1,2\}$ which is equal to $\beta_i^{\text{OLS}}(S_1) =
(c_i, 0)^\top$, where
\begin{equation*}
c_i = \frac{\cov(X_i^1, Y_i)}{\bbV(X_i^1)} = \begin{cases*}
  3/2 & $i\in I_1$ \\
  5/3 & $i\in I_2$ \\ 
  1 & $i\in I_3$.
\end{cases*}
\end{equation*}
Hence, for all $k\in\{k_1,k_2\}$ the set $S_1$ is not $\{k-1,k\}$-invariant
Moreover, the population OLS residual given $S_2 = \{2\}$ is given by
\begin{equation*}
\epsilon_i(S_2)
\sim \begin{cases*}
\cN(0, 6) & $i\in I_1$\\
\cN(0, 10) & $i\in I_2$\\
\cN(0, 4) & $i\in I_3$.
\end{cases*}
\end{equation*}
Again this implies that for all $k\in\{k_1, k_2\}$ the set $S_2$ is not
$\{k-1,k\}$-invariant. By Proposition~\ref{prop:ccp_alt_char}, both $k_1$ and
$k_2$ are CCPs. This shows that in the case of hidden confounding between $Y$
and $X^{\operatorname{PA}(Y)}$, it is no longer true that the existence of  a
CCP implies a change in the causal mechanism of $Y$ (as specified in
\eqref{eq:ex_y}) as in the unconfounded case considered in
Proposition~\ref{prop:ccp_under_seq_lscm_with_hidden}. Nevertheless, we consider
time points like $k_1$ and $k_2$ as conceptually different from other RCPs in
that they represent changes in what can be thought of as the observed causal
mechanism.
\end{example}

\section{Causal change point detection} \label{sec:ccpd} 

We now consider how to detect CCPs, that is, given a time interval $I \in\cI$,
we would like to decide whether there exists a CCP $k\in I$. For a fixed time
interval $I \in \cI$, the absence of CCPs in $I$ is equivalent (see
Proposition~\ref{prop:ccp_alt_char}) to the null hypothesis
\begin{equation} \label{hypo:no_ccp}
\cH_0^I: \exists S\in\cS \text{ s.t. } S \text{ is }
I\text{-invariant}.
\end{equation} 
The goal is to construct a (possibly randomized) hypothesis test $\varphi_I:
\bbR^{|I| \times (d+1)} \times \bbR^{|I|} \to \{0,1\}$ for $\cH_0^I$. The test
$\varphi_I$ is a function of the data $(\bX_I, \bY_I)$ and rejects the null
hypothesis $\cH_0^I$ if $\varphi_I(\bX_I, \bY_I)=1$ and does not reject it if
$\varphi_I(\bX_I, \bY_I)=0$. $\varphi_I$ is said to be level $\alpha \in (0,1)$
if $\sup_{P\in\cH_0^I} \bbP_P\left(\varphi_I(\bX_I, \bY_I) = 1\right) \leq
\alpha$, and it is said to have power $\beta\in (0,1)$ against an alternative
$P\not\in\cH_0^I$ if $\bbP_{P}\left(\varphi_I(\bX_I, \bY_I) = 1\right) = \beta$.

Testing $\cH_0^I$ can be split up into a multiple testing problem by considering
for all $S\in\cS$ the null hypothesis
\begin{equation}\label{hypo:s_inv}
  \cH_{0,S}^I: S \text{ is } I\text{-invariant}.
\end{equation}
This null hypothesis equals the hypothesis that the population OLS coefficient
and residuals given $S$ do not change. For such settings, tests have been
derived previously (see Section~\ref{sec:explicit_tests}). Given a collection of
tests $(\varphi_I^S)_{S\in\cS}$ for the null hypotheses
$(\cH_{0,S}^I)_{S\in\cS}$ that are at level $\alpha$, we can combine them to a
test for $\cH_0^I$ as follows.

\begin{proposition} \label{prop:correct_level} Let $(\varphi_I^S)_{S\in\cS}$ be
a family of tests for the hypotheses $(\cH_{0,S}^I)_{S\in\cS}$ where for all
$S\in\cS$, $\varphi_I^S: \bbR^{|I|\times |S|} \times \bbR^{|I|}\to\{0, 1\}$, and
$\varphi_I^S$ is level $\alpha \in (0, 1)$. Then the test $\varphi_I:\bbR^{|I|
\times (d+1)} \times \bbR^{|I|} \to \{0,1\}$ defined for all $x\in\bbR^{|I|
\times (d+1)}$ and all $y\in\bbR^{|I|}$ by
\begin{equation*} 
\varphi_I(x, y) \coloneqq 
\begin{cases}
1 & \text{if } \min_{S\in\cS}\varphi_I^S(x^S, y) = 1 \\
0 & \text{otherwise}
\end{cases}
\end{equation*}
is level $\alpha$ for $\cH_0^I$.
\end{proposition}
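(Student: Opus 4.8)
The plan is to exploit the union--intersection structure relating the null $\cH_0^I$ to the individual nulls $\cH_{0,S}^I$. By definition, $\cH_0^I$ asserts the \emph{existence} of an $I$-invariant set, whereas the combined test $\varphi_I$ rejects only when \emph{every} individual test rejects (it is the minimum, i.e.\ the ``intersection of rejections''). First I would fix an arbitrary distribution $P\in\cH_0^I$. By the definition of $\cH_0^I$ in \eqref{hypo:no_ccp}, there exists at least one set $S_0\in\cS$ that is $I$-invariant, which is exactly the statement that $P\in\cH_{0,S_0}^I$. This single ``good'' set is the only ingredient needed to control the level.

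The key step is a pointwise domination. Since $\varphi_I(x,y)=\min_{S\in\cS}\varphi_I^S(x^S,y)$, we have $\varphi_I(x,y)\le\varphi_I^{S_0}(x^{S_0},y)$ for every $(x,y)$, and in particular $\varphi_I(\bX_I,\bY_I)\le\varphi_I^{S_0}(\bX_I^{S_0},\bY_I)$ almost surely. Taking $\bbP_P$-probabilities of the event $\{\varphi_I=1\}$ and using that $P\in\cH_{0,S_0}^I$ together with the assumed level of $\varphi_I^{S_0}$ yields
\begin{equation*}
\bbP_P\big(\varphi_I(\bX_I,\bY_I)=1\big)
\le \bbP_P\big(\varphi_I^{S_0}(\bX_I^{S_0},\bY_I)=1\big)
\le \sup_{P'\in\cH_{0,S_0}^I}\bbP_{P'}\big(\varphi_I^{S_0}(\bX_I^{S_0},\bY_I)=1\big)
\le \alpha.
\end{equation*}
Since $P\in\cH_0^I$ was arbitrary, taking the supremum over all such $P$ gives $\sup_{P\in\cH_0^I}\bbP_P(\varphi_I=1)\le\alpha$, i.e.\ $\varphi_I$ is level $\alpha$ for $\cH_0^I$.

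I do not expect a genuine obstacle here; the content is the correct bookkeeping of the definitions. The one point worth care is that the combination must be the minimum (reject if and only if \emph{all} $\varphi_I^S$ reject): this is precisely what makes $\cH_0^I$, an existential/union null, controllable by a single invariant set, via the domination $\varphi_I\le\varphi_I^{S_0}$. Had we instead rejected whenever \emph{some} test rejected, the level bound would fail without a multiplicity correction. I would also emphasize that the argument invokes only one of the tests and therefore requires no independence or joint distributional assumptions across the family $(\varphi_I^S)_{S\in\cS}$.
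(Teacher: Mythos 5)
Your proof is correct and follows essentially the same route as the paper's: pick the $I$-invariant set $S_0$ guaranteed by $\cH_0^I$, use the pointwise domination $\varphi_I\le\varphi_I^{S_0}$ coming from the minimum, and apply the level guarantee of $\varphi_I^{S_0}$. The paper states this in two lines; your version merely makes the bookkeeping (and the irrelevance of any joint structure across the family of tests) explicit.
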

The proof of this proposition is straightforward and is included in
Appendix~\ref{app:proofs}.

\subsection[Tests for H0S]{Tests for $\cH_{0,S}^I$} \label{sec:explicit_tests}

We first introduce the following definition of population OLS parameter and
residuals over a time interval given a subset of covariates.
\begin{definition}[Population OLS over an interval given a subset of covariates]
\label{def:pop_ols_interval} Let $I \in \cI$ and assume $\sum_{i\in
I}\bbE[X_iX_i^\top]$ is invertible. For all $S\in\cS$, the \emph{population OLS
parameter given $S$ over $I$} is the vector
$\beta_I^{\text{OLS}}(S)\in\bbR^{d+1}$ where 
\begin{equation*}
\Big(\beta_I^{\text{OLS}}(S)\Big)^S = \Big[\sum_{i\in I}\bbE[X_i^S(X_i^S)^\top]\Big]^{-1}\sum_{i\in I}\bbE(X_i^SY_i)
\end{equation*}
and $\big(\beta_I^{\text{OLS}}(S)\big)^j = 0$ for all
$j\in\{1,\ldots,d+1\}\setminus S$. For all $\ell\in \{1, \ldots,n\}$, the
\emph{population OLS residual at $\ell$ given $S$ over $I$} is given by
$\epsilon_\ell^{I}(S) \coloneqq Y_\ell - X_\ell^\top\beta_{I}^{\text{OLS}}(S)$
with the convention that $\epsilon^{I}(S) = \Big(Y_\ell -
X_\ell^\top\beta_{I}^{\text{OLS}}(S)\Big)_{\ell\in I}\in\bbR^{|I|}$. 
\end{definition}
One way to test the null hypothesis $\cH_{0,S}^I$ is to first divide the
interval $I$ into two sub-intervals $I_1 \coloneqq \{\min(I),\ldots, \min(I) +
\lfloor \frac{|I|}{2}\rfloor - 1\}$ and $I_2\coloneqq\{\min(I) +
\lfloor\frac{|I|}{2}\rfloor,\ldots,\max(I)\}$. Then, $\cH_{0,S}^I$ in
\eqref{hypo:no_ccp} implies 
\begin{equation} \label{hypo:two_env}
\cH_{0,S}^{I_1, I_2}: \beta_{I_1}^{\text{OLS}}(S) = \beta_{I_2}^{\text{OLS}}(S) \text{ and } \epsilon^{I_1}(S) \deq \epsilon^{I_2}(S).
\end{equation}
The reverse implication, however, is not true in general: $\cH_{0,S}^{I_1, I_2}$
does not generally imply $\cH_{0,S}^I$. \eqref{hypo:two_env} can be tested by
e.g., the Chow test \citep{chow1960tests}. Details of the Chow test are given in
Appendix~\ref{app:chow_test}. A version of this test has also been suggested for
Invariant Causal Prediction \citep{peters2016causal}. As an alternative, one can
use the procedure proposed by \citet{pfister2019invariant}: instead of two
sub-intervals, one considers a pre-defined grid over the time indices and
combines test statistics by resampling scaled versions of the residuals. 

\section{Causal change point localization} \label{sec:ccpl}

We now discuss two approaches for estimating the locations of CCPs. The first
approach is based on testing candidates. By Definition~\ref{def:ccp}, CCPs are a
subset of RCPs. Thus, if we are given the set of RCPs, we can use the detection
method described in Section~\ref{sec:ccpd} to identify the CCPs among them. An
alternative approach is based on a loss function that aims to detect the CCPs
directly. For localizing multiple CCPs, we can combine the proposed loss
function (see Definition~\ref{def:causal_stablility_loss}) with existing
multiple change point localization algorithms. Popular multiple change point
localization algorithms are often of two types: algorithms based on dynamic
programming \citep[e.g.,][]{hawkins1976point, killick2012optimal} and greedy
algorithms \citep[e.g.,][]{vostrikova1981detecting, fryzlewicz2014wild}. In
order to estimate the locations of all CCPs, both approaches rely on statistical
methods for detecting changes in both regression parameters and the residual
distributions. 

Throughout this section, we assume there exists a set of $q \in \{1,\ldots,
n-2\}$ CCPs $\cT \coloneqq\{\tau_1,\ldots,\tau_q\}$, where $\tau_i < \tau_{i+1}$
for all $i\in\{1,\ldots,q-1\}$ and we use the convention that $\tau_0\coloneqq
1$ and $\tau_{q+1}\coloneqq n+1$.

\subsection{Causal change point localization by pruning candidates}
\label{sec:ccpl_candi}

Assume we are given a candidate set
$\cK=\{k_1,\ldots,k_l\}\subseteq\{2,\ldots,n-1\}$ of potential CCPs. This could,
for example, be the set of RCPs or a superset of the RCPs (see
Definition~\ref{def:rcp}). For the purpose of this section, we assume that the
true CCPs are contained in $\cK$ but in practice one would estimate the set
$\cK$ using existing methods for localizing RCPs
\citep[e.g.,][]{bai1997estimating}, which may lead to violations of this
assumption. We can then prune the candidate set $\cK$ by testing whether a
candidate $k_j$ is indeed a causal change point considering the interval
$I_j=\{k_{j-1},\ldots,k_{j+1}-1\}$ (with the convention that $k_0 = 1$ and
$k_{l+1} = n+1$) and using a test for $\cH_0^{I_j}$ discussed in
Section~\ref{sec:ccpd}. The detailed procedure is provided in
Algorithm~\ref{alg:prune_candi} in Appendix~\ref{app:alg}. 

Proposition~\ref{prop:candi_prop} gives lower bounds (which are functions of the
properties of the test) of the probability that Algorithm~\ref{alg:prune_candi}
localizes only the true CCPs and the probability that
Algorithm~\ref{alg:prune_candi} localizes all the true CCPs when the candidate
set is a superset of the true CCPs.

\begin{proposition}\label{prop:candi_prop} Denote by
$\cT\subseteq\{2,\ldots,n\}$ the set of CCPs and by
$\cK=\{k_1,\ldots,k_L\}\subseteq\{2,\ldots,n\}$, for $L\geq 1$, a candidate set
of CCPs satisfying $\cT\subseteq\cK$. Moreover, denote for all
$\ell\in\{1,\ldots,L\}$, the intervals
$I_{k_\ell}\coloneqq\{k_{\ell-1},\ldots,k_{\ell+1}-1\}$, where $k_0=1$ and
$k_{L+1}=n+1$. Let $\widehat{\cT}$ be the CCP estimator defined in
Algorithm~\ref{alg:prune_candi} and let $(\varphi_I)_{I\in\cI}$ be a collection
of tests for $(\cH_0^I)_{I\in\cI}$.  Then, the following two statements hold:
\begin{itemize}
\item[(i)] Let $\alpha\in (0,1)$. If for all $k\in\cK$ it holds that
$\varphi_{I_k}$ is level $\alpha$, then
\begin{equation*}
  \mathbb{P}(\widehat{\cT}\subseteq \cT)\geq 1- (|\cK|-|\cT|)\cdot\alpha.
\end{equation*}
\item[(ii)] Let $\beta\in (0,1)$. If for all $\ell\in\{1,\dots,L\}$ with
$k_{\ell}\in\cT$ it holds that $\mathbb{P}(\varphi_{I_{k_{\ell}}}=1)\geq\beta$,
then
\begin{equation*}
  \mathbb{P}(\cT\subseteq \widehat{\cT})\geq 1- |\cT|\cdot(1-\beta).
\end{equation*}  
\end{itemize}
\end{proposition}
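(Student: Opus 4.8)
The plan is to unfold Algorithm~\ref{alg:prune_candi}, which retains a candidate $k_\ell\in\cK$ in $\widehat{\cT}$ precisely when its associated test rejects, i.e.\ $\widehat{\cT}=\{k_\ell\in\cK : \varphi_{I_{k_\ell}}(\bX_{I_{k_\ell}},\bY_{I_{k_\ell}})=1\}$, and then to translate each of the two set inclusions into a union bound over the individual testing decisions. The whole argument hinges on one structural observation: because the candidate set contains all true CCPs ($\cT\subseteq\cK$) and the candidates are ordered, the interval $I_{k_\ell}=\{k_{\ell-1},\ldots,k_{\ell+1}-1\}$ can contain at most one CCP, namely $k_\ell$ itself. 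Indeed, every true CCP lies in $\cK$, and the only candidate strictly between $k_{\ell-1}$ and $k_{\ell+1}$ is $k_\ell$; the changes associated with the endpoints $k_{\ell-1}$ and $k_{\ell+1}$ occur across the consecutive pairs $(k_{\ell-1}-1,k_{\ell-1})$ and $(k_{\ell+1}-1,k_{\ell+1})$, one member of which lies outside $I_{k_\ell}$, so they cannot break invariance on $I_{k_\ell}$. I expect establishing this observation cleanly --- in particular arguing that boundary CCPs do not violate $\cH_0^{I_{k_\ell}}$ --- to be the main obstacle, as it is exactly where the precise choice of interval endpoints in the algorithm is used.

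With this in hand, part (i) follows directly. If $k_\ell\in\cK\setminus\cT$, then $k_\ell$ is not a CCP, so by the observation $I_{k_\ell}$ contains no CCP and hence, by the equivalence recorded before \eqref{hypo:no_ccp} (which rests on Proposition~\ref{prop:ccp_alt_char}), the null $\cH_0^{I_{k_\ell}}$ holds. The level-$\alpha$ property then gives $\bbP(k_\ell\in\widehat{\cT})=\bbP(\varphi_{I_{k_\ell}}=1)\leq\alpha$. Writing the bad event as $\{\widehat{\cT}\not\subseteq\cT\}=\bigcup_{k_\ell\in\cK\setminus\cT}\{k_\ell\in\widehat{\cT}\}$ (using $\widehat{\cT}\subseteq\cK$) and applying subadditivity over these $|\cK|-|\cT|$ indices yields $\bbP(\widehat{\cT}\not\subseteq\cT)\leq(|\cK|-|\cT|)\alpha$, and taking complements gives the stated lower bound on $\bbP(\widehat{\cT}\subseteq\cT)$.

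For part (ii) I would argue dually. If $k_\ell\in\cT$, then $k_\ell$ is a genuine CCP inside $I_{k_\ell}$, so $\cH_0^{I_{k_\ell}}$ fails and the underlying distribution is an alternative, against which the power assumption gives $\bbP(k_\ell\in\widehat{\cT})=\bbP(\varphi_{I_{k_\ell}}=1)\geq\beta$, i.e.\ $\bbP(k_\ell\notin\widehat{\cT})\leq 1-\beta$. Decomposing $\{\cT\not\subseteq\widehat{\cT}\}=\bigcup_{k_\ell\in\cT}\{k_\ell\notin\widehat{\cT}\}$ and applying a union bound over the $|\cT|$ true CCPs gives $\bbP(\cT\not\subseteq\widehat{\cT})\leq|\cT|(1-\beta)$, hence $\bbP(\cT\subseteq\widehat{\cT})\geq 1-|\cT|(1-\beta)$. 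Both bounds use only subadditivity of probability, so no independence across the overlapping intervals $I_{k_\ell}$ is required; their overlap induces only dependence among the testing decisions, which the union bound simply ignores.
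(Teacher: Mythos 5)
Your proof is correct and follows essentially the same route as the paper's: identify $\widehat{\cT}$ with the set of candidates whose tests reject, then apply a union bound over $\cK\setminus\cT$ for (i) and over $\cT$ for (ii). The one place you go beyond the paper is in explicitly justifying why $\cH_0^{I_{k_\ell}}$ holds for $k_\ell\in\cK\setminus\cT$ (i.e., that a CCP at the boundary point $k_{\ell-1}$ does not break invariance on $I_{k_\ell}$ because its defining pair straddles the interval's left edge) --- the paper simply asserts this in one line, so your added care is a welcome, not a divergent, step.
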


A proof can be found in Appendix~\ref{app:proofs}. Following
Proposition~\ref{prop:candi_prop}, one may adjust $\alpha$ by a factor $c\leq
1/(|\cK| - |\cT|)$ which ensures that $\bbP(\widehat{\cT} \subseteq\cT)\geq
1-\alpha$. One special case is the Bonferroni correction, which corresponds to
$c = 1/|\cK|$ and always preserves coverage at level $\alpha$ but might be
conservative if there are many CCPs. In practice, the candidate set may not be a
superset of the true CCPs, i.e., the candidate set may not contain all the true
CCPs, or some candidates are time points that slightly deviate from the true
CCPs, or a combination of both. If this is the case, the resulting CCP estimates
can be arbitrarily biased. To check the validity of the estimates, one can test
each of the sub-intervals separated by $\widehat\cT$: if there is no invariant
set in a sub-interval, it means that there exist at least one CCP that is not in
the candidate set, or the candidates surrounding the sub-interval are not true
CCPs. We illustrate this in Appendix~\ref{app:add_exp}.

\subsection{Causal change point localization via a loss function}
\label{sec:ccpl_loss}

An alternative approach to localizing causal change points is by finding the
minima of a loss function. Here, we propose a loss function that assesses the
level of causal non-invariance at each time point. Ideally, the loss function
should achieve its minimal value in an interval with a single CCP at the true
CCP. In Section~\ref{sec:inv_loss_pop}, we introduce the loss function and
discuss its properties at population level given a single CCP. We discuss how to
estimate the location of one CCP using an empirical version of the loss function
in Section~\ref{sec:inv_loss_sample}. Localization of multiple CCPs is discussed
in Section~\ref{sec:multi_ccps}, where we leverage modified versions of existing
multiple change point detection algorithms \citep{vostrikova1981detecting,
baranowski2019not, kovacs2020seeded} to localize each of them.

\subsubsection{Causal stability loss at population level}
\label{sec:inv_loss_pop}

In this section, we introduce a loss function to capture the change in causal
mechanism of the response $Y$. Intuitively, for an interval $I\in\cI$, the loss
at a time point $i\in I$ sums up the level of non-invariance over the two
sub-intervals of $I$ to the left and right of $i$. Suppose there exists exactly
one CCP in $I$, this loss function achieves its minimum value at the true CCP at
population level. To formally introduce the loss function, we require the
following notation.

\begin{notation2}
Let $s\in\bbN$ be a minimal segmentation length. For all intervals
$I\in\mathcal{I}$ with $|I| \geq 2s$, let $m_s(I)\coloneqq
\lfloor\frac{|I|}{s}\rfloor$ and let $P_1(I),\ldots,P_{m_s(I)}(I)$ be a
partition of $I$ into $m_s(I)$ intervals such that $|P_r(I)| = s$ for $r
\in\{1,\ldots, m_s(I)-1\}$ and $|P_{m_s(I)}(I)| = |I| - \big(m_s(I)-1\big)\cdot
s$. For all $r\in\{1,\ldots, m_s(I)\}$, we denote the complement of $P_r(I)$ as
$P_r^c(I) \coloneqq I\setminus P_r(I)$. For all $I\in\cI$ with $|I| < 2s$, we
let $m_s(I) = 1$, $P_1(I) = I$, and with a slight abuse of notation, $P_1^c(I) =
I$. An illustration of this notation is given in Figure~\ref{fig:split}. For all
$I,J \in\cI$, we define $\displaystyle V_{I,J}(S) = \frac{1}{|I|}\sum_{\ell\in
I}\bbE\Big[\big(\epsilon_\ell^J(S)\big)^2\Big]$ where $\epsilon_\ell^J(S)$ is
the population OLS residual at $\ell$ given $S$ over $J$ (see
Definition~\ref{def:pop_ols_interval}).
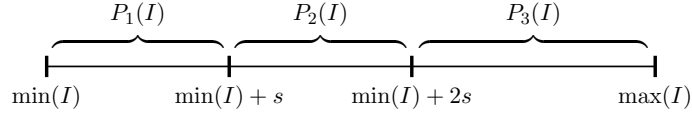
\begin{figure}[!hbt]
\centering
\resizebox*{0.6\textwidth}{!}{
\begin{tikzpicture}
\draw[thick] (0, 0) -- (10,0); 
\foreach \x in {0, 3, 6, 10} 
\draw[ultra thick] (\x cm, 5pt) -- (\x cm, -5pt); 
\draw (0, 0) node[below=5pt] {$\min(I)$}; 
\draw (3, 0) node[below=5pt] (k1) {$\min(I) + s$}; 
\draw (6, 0) node[below=5pt] (k2) {$\min(I) + 2s$}; 
\draw (10, 0) node[below=5pt] {$\max(I)$}; 
\draw (1.5, 0) node[below=5pt] (t1) {}; 
\node[above = 0.3cm of t1]
{$\overbrace{\phantom{\rule{0.18\textwidth}{0.5pt}}}$};
\node[above = 0.7cm of t1]{$P_1(I)$};
\draw (4.5, 0) node[below=5pt] (t2) {}; 
\node[above = 0.3cm of t2]
{$\overbrace{\phantom{\rule{0.18\textwidth}{0.5pt}}}$};
\node[above = 0.7cm of t2]{$P_2(I)$};
\draw (8, 0) node[below=5pt] (t3) {}; 
\node[above = 0.3cm of t3]
{$\overbrace{\phantom{\rule{0.24\textwidth}{0.5pt}}}$};
\node[above = 0.7cm of t3]{$P_3(I)$};
\end{tikzpicture}}
\caption{Illustration of the partition of an interval $I$ with $3s \leq |I| \leq
4s-1$ into $3$ sub-intervals. The ticks $\min(I)$, $\min(I) + s$ and $\min(I) +
2s$ mark the beginning of the three intervals and $\max(I)$ marks the end of the
third interval.}
\label{fig:split}
\end{figure}
\end{notation2}

If there is no CCP in $I\in\cI$, then there exists $S\in\cS$ and $c \in
\mathbb{R}$ such that for all $J\subseteq I$, $V_{I\setminus J,J}(S) = V_{J,
J}(S) = c$. This motivates the following definition of minimal OLS instability
(Definition~\ref{def:min_stab_loss}) which serves as the basis of our causal
stability loss (Definition~\ref{def:causal_stablility_loss}).

\begin{definition}[Minimal OLS instability]\label{def:min_stab_loss} Let
$I\in\cI$ and let $s\in\bbN$ be a minimal segmentation length. The \emph{minimal
OLS instability} over the interval $I$ is defined as 
\begin{align*}
\cC_{s}(I) \coloneqq \min_{S\in\cS} \sum_{r=1}^{m_s(I)} \Big(
  V_{P_r^c(I), P_r(I)}(S) - V_{P_r(I), P_r(I)}(S)
\Big)^2.
\end{align*}
\end{definition}

It satisfies the following property. 

\begin{proposition}\label{prop:min_ols_ins_wo_ccp} Let $I\in\cI$ and $s\in\bbN$.
Suppose there is no CCP in $I$, then $\cC_s(I) = 0$. 
\end{proposition}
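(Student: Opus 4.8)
The plan is to reduce the claim to the existence of a single set $S\in\cS$ that is invariant over all of $I$, and then to verify that this $S$ forces every summand in $\cC_s(I)$ to vanish. Since there is no CCP in $I$, the null hypothesis $\cH_0^I$ in \eqref{hypo:no_ccp} holds, so there exists an $I$-invariant set $S\in\cS$; by Definition~\ref{def:invsets} there are $\beta\in\bbR^{d+1}$ and a distribution $F$ with $\beta_i^{\text{OLS}}(S)=\beta$ and $\epsilon_i(S)\iid F$ for all $i\in I$. As $\cC_s(I)$ is a minimum over $\cS$ of a sum of squares, it is nonnegative, and it is therefore enough to exhibit one $S$ for which every term is zero.

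The technical core is to show that pooling OLS over any sub-interval reproduces the common coefficient, i.e.\ $\beta_J^{\text{OLS}}(S)=\beta$ for every nonempty $J\subseteq I$. From Definition~\ref{def:pop_ols}, the equality $\beta_i^{\text{OLS}}(S)=\beta$ is equivalent to $\bbE[X_i^SY_i]=\bbE[X_i^S(X_i^S)^\top]\beta^S$ for each $i\in I$. Summing this identity over $i\in J$ and left-multiplying by $\big[\sum_{i\in J}\bbE[X_i^S(X_i^S)^\top]\big]^{-1}$ (invertible by Definition~\ref{def:pop_ols_interval}) gives $\big(\beta_J^{\text{OLS}}(S)\big)^S=\beta^S$, while the coordinates outside $S$ vanish for both vectors by construction; hence $\beta_J^{\text{OLS}}(S)=\beta$. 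Consequently, for every $\ell\in I$ and every $J\subseteq I$ the interval residual collapses to the pointwise one, $\epsilon_\ell^J(S)=Y_\ell-X_\ell^\top\beta=\epsilon_\ell(S)$.

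It then remains to read off the variances. Since $\epsilon_\ell(S)\iid F$ for $\ell\in I$, the second moment $c\coloneqq\bbE[(\epsilon_\ell(S))^2]$ is independent of $\ell$, so for all nonempty $A,J\subseteq I$ we obtain $V_{A,J}(S)=\frac{1}{|A|}\sum_{\ell\in A}\bbE[(\epsilon_\ell^J(S))^2]=c$. Taking $A=P_r^c(I),\,J=P_r(I)$ and then $A=J=P_r(I)$ shows $V_{P_r^c(I),P_r(I)}(S)=V_{P_r(I),P_r(I)}(S)=c$, so every summand $\big(V_{P_r^c(I),P_r(I)}(S)-V_{P_r(I),P_r(I)}(S)\big)^2$ is zero, whence $\cC_s(I)=0$.

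I expect the step requiring the most care to be the reduction in the first paragraph: ``no CCP in $I$'' is literally a condition on consecutive pairs $\{k-1,k\}$, whereas the argument needs one set $S$ that is invariant across the entire interval $I$, which is exactly the content of $\cH_0^I$; I would invoke the equivalence recorded after Proposition~\ref{prop:ccp_alt_char}. The only remaining points are bookkeeping: that $P_r^c(I)$ is nonempty whenever $m_s(I)\geq 2$, and the degenerate case $|I|<2s$, where $m_s(I)=1$ and $P_1(I)=P_1^c(I)=I$ by convention, so the single summand is trivially $(V_{I,I}(S)-V_{I,I}(S))^2=0$.
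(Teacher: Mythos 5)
Your proof is correct and follows essentially the same route as the paper's: obtain a single $I$-invariant set $S$ from the absence of CCPs, show the pooled coefficient $\beta_J^{\text{OLS}}(S)$ over any sub-interval equals the common $\beta$ (you re-derive inline what the paper isolates as Lemma~\ref{lem:ols_reduce}), conclude that all quantities $V_{A,J}(S)$ equal the common second moment, and hence that every summand in $\cC_s(I)$ vanishes. Your explicit flagging of the passage from ``no CCP in $I$'' to the existence of a single $I$-invariant set is apt; the paper makes the same move silently, relying on the equivalence it asserts when introducing $\cH_0^I$.
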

A proof is given in Appendix~\ref{app:proofs}. We then define the causal
stability loss at a time point $i$ in an interval $I$ as the sum of minimal OLS
instability over the sub-intervals to the left and right of the time point $i$.

\begin{definition}[Causal stability loss]\label{def:causal_stablility_loss} Let
$I\in\cI$ and let $s\in\bbN$ be a minimal segmentation length. For all $i\in I
\setminus \{\min(I),\max(I)\}$, we define $I_{i-} \coloneqq
\{\min(I),\ldots,i-1\}$ and $I_{i+} \coloneqq \{i,\ldots,\max(I)\}$, then we
define the \emph{causal stability} loss as 
\begin{align*}
\cL_{I,s}(i) = \frac{\cC_s(I_{i-}) + \cC_s(I_{i+})}{m_s(I_{i-}) + m_s(I_{i+})},
\end{align*}
where $\cC_s(I_{i-})$ and $\cC_s(I_{i+})$ are as defined in
Definition~\ref{def:min_stab_loss}.
\end{definition}

The following property of the causal stability loss follows directly from
Proposition~\ref{prop:min_ols_ins_wo_ccp}.
\begin{corollary}
Let $I\in\cI$ and $s\in\mathbb{N}$. If there is no CCP in $I$, then for all
$i\in I$ $\cL_{I,s}(i) = 0$; if $\tau \in \{\min(I)+1,\ldots,\max(I)-1\}$ is the
only CCP in $I$, then $\cL_{I,s}(\tau) = 0$.
\end{corollary}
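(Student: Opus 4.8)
The plan is to reduce both assertions directly to Proposition~\ref{prop:min_ols_ins_wo_ccp}, which states that the minimal OLS instability $\cC_s(\cdot)$ vanishes on any interval containing no CCP. Recall from Definition~\ref{def:causal_stablility_loss} that
$\cL_{I,s}(i) = \big(\cC_s(I_{i-}) + \cC_s(I_{i+})\big)/\big(m_s(I_{i-}) + m_s(I_{i+})\big)$, and that $m_s(\cdot)\geq 1$ always, so the denominator is strictly positive. Hence, in each case it suffices to show that both $\cC_s(I_{i-})$ and $\cC_s(I_{i+})$ equal zero.

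For the first claim, I would fix an interior point $i\in I\setminus\{\min(I),\max(I)\}$, the only points at which $\cL_{I,s}$ is defined. Both sub-intervals $I_{i-}=\{\min(I),\ldots,i-1\}$ and $I_{i+}=\{i,\ldots,\max(I)\}$ are subsets of $I$, so neither can contain a CCP when $I$ contains none. Applying Proposition~\ref{prop:min_ols_ins_wo_ccp} to each sub-interval then gives $\cC_s(I_{i-})=\cC_s(I_{i+})=0$, and therefore $\cL_{I,s}(i)=0$.

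For the second claim, the essential observation is that splitting $I$ exactly at $\tau$ places the change point on the boundary between the two halves, so that $\tau$ is interior to neither. A CCP is defined through the pair $(\tau-1,\tau)$ (see Proposition~\ref{prop:ccp_alt_char}), so a CCP lying strictly inside a sub-interval $J$ must be an index $k$ with both $k-1,k\in J$. For $I_{\tau-}=\{\min(I),\ldots,\tau-1\}$ we have $\tau\notin I_{\tau-}$, and for $I_{\tau+}=\{\tau,\ldots,\max(I)\}$ the point $\tau$ is the left endpoint, so its predecessor $\tau-1$ lies outside $I_{\tau+}$; in both cases $\tau$ fails to be an internal CCP. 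Since $\tau$ is the only CCP in $I$, every other candidate index in either half is not a CCP, so neither $I_{\tau-}$ nor $I_{\tau+}$ contains a CCP. Proposition~\ref{prop:min_ols_ins_wo_ccp} again yields $\cC_s(I_{\tau-})=\cC_s(I_{\tau+})=0$, whence $\cL_{I,s}(\tau)=0$.

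The nearest thing to an obstacle is the bookkeeping about where a CCP "lives" relative to an interval: one must make explicit the convention that, because $\tau$ is defined via the comparison $(\tau-1,\tau)$, splitting at $\tau$ removes it from the interior of both halves. A minor secondary point is the treatment of degenerate sub-intervals (e.g.\ $I_{\tau-}$ reducing to a single index when $\tau=\min(I)+1$): there the convention $P_1^c=P_1=I$ forces the single summand $\big(V_{I,I}(S)-V_{I,I}(S)\big)^2=0$, so $\cC_s$ vanishes regardless, and the argument goes through unchanged. Once these conventions are stated, both parts are immediate consequences of the already-established vanishing of $\cC_s$ on CCP-free intervals.
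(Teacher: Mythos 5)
Your argument is correct and follows exactly the route the paper intends: the paper offers no separate proof, stating only that the corollary ``follows directly from Proposition~\ref{prop:min_ols_ins_wo_ccp}'', and your reduction --- splitting at $i$ (resp.\ at $\tau$), observing that neither resulting sub-interval contains a CCP because a CCP $k$ is defined through the pair $(k-1,k)$, and applying the proposition to each half --- is precisely that argument spelled out. Your added care about the boundary convention and the degenerate single-index sub-interval is a useful clarification but does not change the approach.
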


\subsubsection{Localizing a single causal change point}
\label{sec:inv_loss_sample}

The loss function $\cL_{I,s}$ can be estimated by replacing the population
quantities with their empirical counterparts. The OLS coefficient given
$S\in\cS$ over an interval $I \in \cI$ can be estimated by the vector
$\hat{\beta}^{\text{OLS}}_I(S)\in\bbR^{d+1}$ where 
\begin{equation*}
\left(\hat{\beta}^{\text{OLS}}_I(S)\right)^S = \argmin_{\beta^S\in\mathbb{R}^{|S|}}\sum_{\ell\in I}\left(Y_\ell - (X_\ell^S)^\top\beta^S\right)^2
\end{equation*}
and $\big(\hat{\beta}_I^{\text{OLS}}(S)\big)^j = 0$ for all
$j\in\{1,\ldots,d+1\}\setminus S$. For all $I\in\cI$, let
$\hat\beta_I^\text{OLS}(S)$ be the estimated OLS coefficient given $S$ over $I$.
For all $\ell\in\{1,\ldots,n\}$, the estimated OLS residual at $\ell$ given $S$
over $I$ is given by $\hat\epsilon_\ell(S) = Y_\ell -
X_\ell^\top\hat{\beta}^{\text{OLS}}_I(S)$ with the convention that
$\hat\epsilon^{I}(S) = \Big(Y_\ell -
X_\ell^\top\hat\beta_{I}^{\text{OLS}}(S)\Big)_{\ell\in I}\in\bbR^{|I|}$. Lastly,
for all $I, J\in\cI$, let $\widehat{V}_{I, J}^2(S) \coloneqq
\frac{1}{|I|}\sum_{\ell\in I} \left(\hat\epsilon_\ell^J\left(S\right)\right)^2$.
Then the minimal OLS instability over $I$ can be estimated by 
\begin{equation*}
\widehat\cC_s(I) \coloneqq
\min_{S\in\cS}\sum_{r=1}^{m(I)}\Big(\widehat{V}_{P_r^c(I), P_r(I)}(S) -
\widehat{V}_{P_r(I), P_r(I)}(S)\Big)^2
\end{equation*}
and the causal stability loss can be estimated by 
\begin{equation*}
\widehat\cL_{I,s}(i) \coloneqq \frac{\widehat\cC_s(I_{i-}) + \widehat\cC_s(I_{i+})}{m(I_{i-}) + m(I_{i+})}.
\end{equation*}

\subsubsection{Localizing multiple causal change points} \label{sec:multi_ccps}

We propose two general approaches to localize multiple CCPs. The first approach
uses the standard binary segmentation algorithm proposed by
\citet{vostrikova1981detecting} (see Algorithm~\ref{alg:bin_seg} in
Appendix~\ref{app:alg}) and then prunes the resulting estimates by
Algorithm~\ref{alg:prune_candi} in Appendix~\ref{app:alg}. The pruning step is
necessary since even at population level, the causal stability loss does not
necessarily achieve its minimum at one of the true CCPs in an interval that
contains multiple CCPs, although it does so when only one CCP exists in an
interval.

As a second approach, we consider a different greedy approach which instead of
searching for change points in a top-down order as the standard binary
segmentation, it searches for change points in a bottom-up order, namely the
seeded binary segmentation algorithm \citep{kovacs2020seeded} with the
narrowest-over-threshold selection procedure \citep{baranowski2019not}. The idea
is to first generate a collection of sets of intervals with increasing
lengths\footnote{The seeded intervals can be seen as generated layer by layer,
as described in Algorithm~\ref{alg:seeded_intervals}. The intervals on the same
layer are  considered to have the same length, despite the small differences
caused by rounding.}. Among the narrowest intervals for which $\cH_0^I$ is
rejected, we estimate one CCP in the interval that has the smallest p-value, and
eliminate all intervals that contain the estimated CCP. We then repeat the
procedure among the remaining sets of intervals from the narrowest to the widest
until $\cH_0^I$ is not rejected for any remaining intervals. The bottom-up order
aims to ensure that each interval only contains at most one CCP, which is
suitable when the loss function may not achieve its minimum at a CCP given
multiple CCPs in an interval. The procedure of obtaining the seeded intervals
\citep[Definition 1,][]{kovacs2020seeded} is given in
Algorithm~\ref{alg:seeded_intervals} in Appendix~\ref{app:alg}.
Algorithm~\ref{alg:seededBS} in Appendix~\ref{app:alg} describes the overall
procedure of the second approach. We compare the above approaches in
Section~\ref{sec:simulation}. 

\section{Numerical Experiments} \label{sec:experiments} 

We demonstrate the performance of our proposed methods based on simulated
datasets. In Section~\ref{sec:simulation} we describe the data generating
process that the simulated experiments are based on. For CCP detection, we show
the level and power given different numbers of splits with the Chow test for
testing $\cH_{0,S}^I$ as described in Section~\ref{sec:explicit_tests}. For CCP
localization, we consider both the case where it is known that there is exactly
one CCP and the case where there are multiple CCPs, and compare the methods
proposed in Section~\ref{sec:ccpl}. All numerical experiments can be reproduced
using the code available at \url{https://github.com/shimenghuang/CausalCP}.

\subsection{Simulated experiments} \label{sec:simulation}

In this section, we consider the following data generating process given for all
$i\in\{1,\ldots,n\}$ by
\begin{equation} \label{eq:five_node_mb_graph}
  \begin{split}
    X_i^1 &\coloneqq \epsilon_i^1 \\
    X_i^2 &\coloneqq \alpha_i^{12} X_i^1 + \epsilon_i^2 \\
    Y_i &\coloneqq \beta_i^{15} X_i^1 + \beta_i^{25} X_i^2 + \epsilon_i^Y \\
    X_i^4 &\coloneqq \epsilon_i^4 \\
    X_i^3 &\coloneqq \alpha_i^{53} Y_i + \alpha_i^{43} X_i^4 + \epsilon_i^3,
  \end{split}
\end{equation}
where $\epsilon_i^j \sim\cN(\mu_i^j, (\sigma_i^j)^2)$ for $j\in\{1,2,3,4,Y\}$.
The induced DAG for all $i\in\{1,\ldots,n\}$ is shown in
Figure~\ref{fig:five_node_mb_graph}. There are in total $15$ parameters in this
data generating process which can be divided into two sets: $4$ parameters that
are related to the causal mechanism of $Y$ with respect to $X$ ($\beta_i^{15}$,
$\beta_i^{25}$, $\mu_i^Y$, and $\sigma_i^Y$), and $11$ parameters that are not
related to causal mechanism of $Y$ with respect to $X$ ($\alpha_i^{12}$,
$\alpha_i^{53}$, $\alpha_i^{43}$, $\mu_i^j$ for $j\in\{1,2,3,4\}$, and
$\sigma_i^j$ for $j\in\{1,2,3,4\}$). Throughout this section, we refer to the
set of parameters that are related to the causal mechanism of $Y$ as the
\emph{causal parameters}, and the set of parameters thar are not related to
causal mechanism of $Y$ as \emph{non-causal parameters}. In the following
experiments, parameters are chosen such that changes in the causal parameters
are CCPs (see Definition~\ref{def:ccp}) and changes in the non-causal parameters
are NCCPs (we have verified this using straight-forward computations). More
details of the data generating process can be found in the
Appendix~\ref{app:exp_det}. All experiments are based on $200$ repetitions.

\begin{figure}[ht]
  \centering
  \begin{tikzpicture}
    \node[state] at (0, 0) (Y) {$Y_i$};
    \node[state] at (-1.5, 1.5) (X1) {$X_i^1$};
    \node[state] at (-1.5, -1.5) (X2) {$X_i^2$};
    \node[state] at (2, 0) (X3) {$X_i^3$};
    \node[state] at (4, 0) (X4) {$X_i^4$};
    \draw[-Latex] (X1) -- (X2);
    \draw[-Latex] (X1) -- (Y);
    \draw[-Latex] (X2) -- (Y);
    \draw[-Latex] (Y) -- (X3);
    \draw[-Latex] (X4) -- (X3);
  \end{tikzpicture}
\caption{DAG induced by \eqref{eq:five_node_mb_graph} for all
$i\in\{1,\ldots,n\}$.}
\label{fig:five_node_mb_graph}
\end{figure}
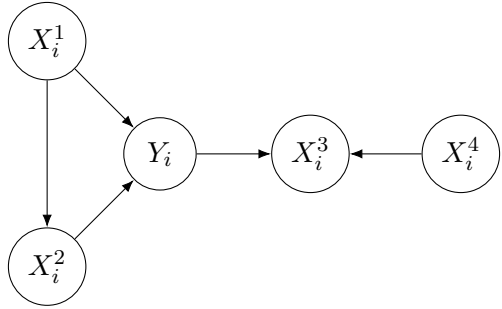

\subsubsection{CCP detection} \label{sec:exp_ccpd}

We demonstrate the power to detect CCPs where changes happen in the causal
parameters $\beta_i^{15}$ and $\beta_i^{25}$ in \eqref{eq:five_node_mb_graph} of
the procedure described in Section~\ref{sec:explicit_tests} based on the Chow
test and show that it holds the correct level. We fix $\alpha$ to be $0.05$ in
this experiment. Figure~\ref{fig:level_power} shows that the procedure has the
most power when the true CCP is in the middle of the interval while it has least
power when the true relative location of one single CCP is close to the
boundaries of the interval. A possible explanation is that the Chow test is
applied on the two sub-intervals to the left and right of the midpoint. When the
true CCP is to the left of the midpoint, the left half of the interval contains
data from a mixture of two distributions before and after the change, and the
right half contains data only from the distribution after the change, similarly
when the true CCP is to the right of the midpoint. Only when the true CCP
coincides with the midpoint, the two sides both contain data from a single
distribution which leads to a higher power with the Chow test. The ``no CCP''
label on the x-axis corresponds to when the interval does not contain a CCP. In
that case a valid method should control the type-I error of wrongly detecting a
CCP at the pre-specified level ($5\%$ in this case). 

\begin{figure}[ht]
\centering
\includegraphics[width=0.8\textwidth]{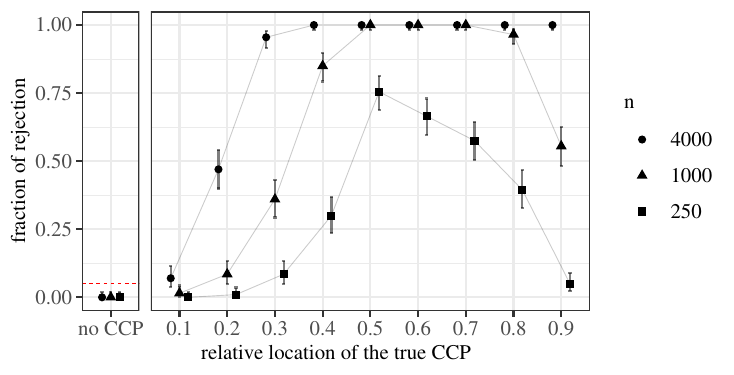}
\caption{Empirical investigation of level and power of the testing procedure
described in Section~\ref{sec:explicit_tests} with increasing sample sizes. The
x-axis corresponds to the relative location of a single CCP (and no CCP). The
error bars are binomial confidence intervals and the red dashed line is at
$0.05$.}
\label{fig:level_power}
\end{figure}

\subsubsection{CCP localization} \label{sec:exp_ccpl} 

We compare the proposed methods for localizing CCPs described in
Section~\ref{sec:ccpl} and the method breakpoints for localizing structural
changes in linear models due to \citet{bai2003computation} implemented in the R
package `strucchange' \citep{zeileis2003testing}. In Experiment 1 and Experiment
2, which consider datasets with a single true CCP, we apply the pruning approach
described in Section~\ref{sec:ccpl_candi} assuming the set of true RCPs is known
(\texttt{Prune-Oracle}) and assuming that they are unknown then using
breakpoints to estimate the RCPs (\texttt{Prune-BP}). These approaches are
compared with the causal stability loss (\texttt{LossCS}) approach described in
Section~\ref{sec:ccpl_loss} and using breakpoints to estimate a single change
point (\texttt{BP-1}). In Experiment 3 where there are two true CCPs, we apply
the following methods: (i) using breakpoints to estimate two change points
(\texttt{BP-2}), (ii) using breakpoints to estimate all change points up to a
specified minimal segment length (\texttt{BP}), (iii) using seeded binary
segmentation with the causal stability loss as in Algorithm~\ref{alg:seededBS}
in Appendix~\ref{app:alg} (\texttt{LossCS-SeededBS}), (iv) using
\texttt{LossCS-SeededBS} combined with a pruning step
(\texttt{LossCS-SeededBS-Prune}), (v) using standard binary segmentation with
the causal stability loss as in Algorithm~\ref{alg:bin_seg} in
Appendix~\ref{app:alg} (\texttt{LossCS-StdBS}), and (vi) using
\texttt{LossCS-StdBS} combined with a pruning step
(\texttt{LossCS-StdBS-Prune}).

\paragraph{Experiment 1: Fixed relative locations of one CCP and two NCCPs with
increasing $n$} For a total number of time points $n$, one true CCP is fixed at
$0.5 n + 1$ and two true NCCPs are fixed at $\lceil0.25 n + 1\rceil$ and
$\lceil0.75 n + 1\rceil$, respectively. The minimal segmentation length for all
methods is set to be $0.1n$. For \texttt{LossCS}, we evaluate the loss at every
$0.05 n$ time points starting from $\lceil0.05 n + 1\rceil$ and ending at
$\lceil0.95 n + 1\rceil$. We compare the different methods for sample sizes $n
\in\{250, 1000, 4000\}$. In Figure~\ref{fig:exp_set1}, we see that under this
setup, all methods except \texttt{BP-1} give estimates concentrating around the
true value with increasing sample size, while estimates based on \texttt{BP-1}
concentrate around the second NCCP. Moreover, \texttt{LossCS} performs better
than the two pruning approaches \texttt{Prune-Oracle} and \texttt{Prune-BP}
which both end up not detecting any CCPs in many simulations. Lastly, the
results of \texttt{Prune-Oracle} and \texttt{Prune-BP} are similar for large
sample sizes, indicating that estimating RCPs from data and using these
estimates as candidates can indeed perform well.

\begin{figure}[ht]
\centering
\includegraphics[width=0.99\textwidth]{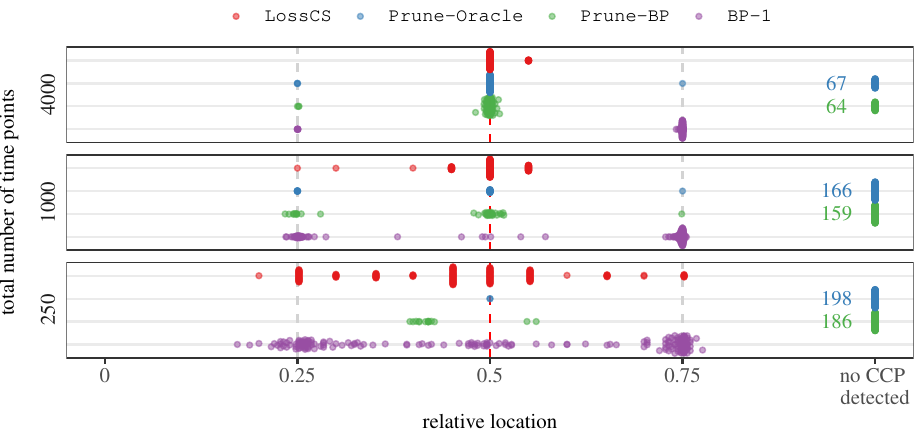}
\caption{Relative locations of the estimated CCP using different methods with
varying number of time points $n$. The red vertical dashed line indicates the
true relative location of the CCP and the two grey vertical dashed lines
correspond to the true relative locations of the NCCPs. With increasing sample
size, estimates from all methods except \texttt{BP-1} concentrate around the
true CCP, but \texttt{Prune-Oracle} and \texttt{Prune-BP} both detect no CCP in
many of the $200$ repetitions.}
\label{fig:exp_set1}
\end{figure}

\paragraph{Experiment 2: Different relative locations of a single CCP} For $n =
2000$, two true NCCPs are fixed at $0.25 n + 1$ and $0.75 n + 1$, respectively.
We evaluate the performance of \texttt{LossCS} when the relative location of a
single CCP is fixed at $\nu n + 1$ for $\nu\in\{0.1,0.2,\ldots,0.9\}$. We fix
the minimum segmentation length to be $0.1 n$ and for \texttt{LossCS} we
evaluate the loss at every $0.1n$ points. Figure~\ref{fig:exp_set2} shows the
estimated versus true relative locations of the CCP. The points are jittered
horizontally for visual clarity. We can see that \texttt{LossCS} performs well
when the true CCP is relatively close to the center of the interval, but when
the true CCP is close to the left (respectively, right) boundary of the
interval, \texttt{LossCS} tends to over- (respectively, under-) estimate the
location.

\begin{figure}[ht]
\centering
\includegraphics[width=0.55\textwidth]{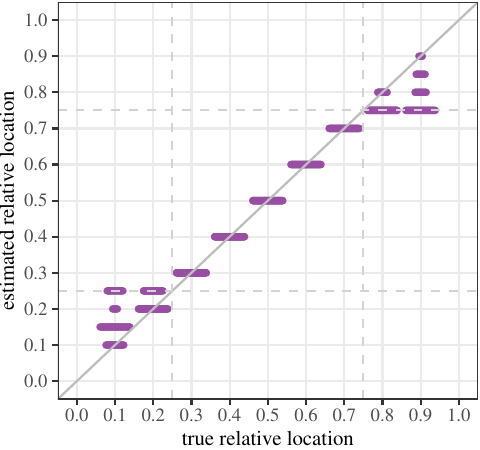}
\caption{Estimated relative locations the CCP given different true relative
locations of the CCP and $n=2000$. The two vertical dashed lines indicate the
(fixed) true relative locations of the NCCPs. \texttt{LossCS} tends to over-
(respectively, under-) estimate the location of CCP when the CCP is close to the
left (respectively, right) boundary.}
\label{fig:exp_set2}
\end{figure}

\paragraph{Experiment 3: Fixed relative locations of two CCPs and one NCCP with increasing $n$}

For a total number of time points $n$, two CCPs are fixed at $0.2 n + 1$ and
$0.8 n +1$, and one NCCP is fixed at $0.5 n+1$. We fix the minimum segmentation
length for all methods to be $0.2 n$ and for \texttt{LossCS} we evaluate the
loss at every time point other than the $\max(\lceil 0.1|I| \rceil, 10)$ points
at the beginning and end of each seeded interval $I$. We compare the different
methods for sample sizes $n \in\{1000, 2000, 4000\}$. Bonferroni correction is
applied in the pruning step when there is more than one candidate.
Figure~\ref{fig:exp_set3} contains the histogram of the estimates based on each
approach under this setting. As can be seen, using \texttt{BP-2} to estimate two
CCPs is not a valid method as it might detect NCCPs as in this example. With a
large enough sample size, \texttt{Prune-BP} performs best based on the number of
false and true positives, followed by \texttt{LossCS-SeedBS-Prune},
\texttt{LossCS-SeedBS}, and \texttt{LossCS-StdBS-Prune}. When the sample size is
small, \texttt{LossCS-SeedBS-Prune} performs best in the sense that it has the
least number of false positives and the most number of true positives. The
pruning step after \texttt{LossCS-SeedBS} does not seem to improve the
estimation much especially when the sample size is large, as the number of false
positives is already low. However, for \texttt{LossCS-StdBS} the pruning step is
crucial in this example, as it tends to first split on the NCCP hence leading to
many false positives.

\begin{figure}[!htb]
\centering
\includegraphics[width=\textwidth]{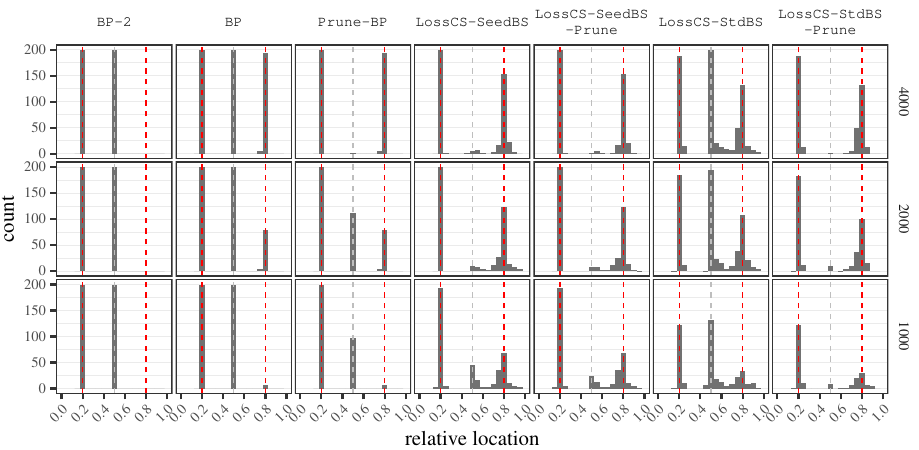}
\caption{Histogram of the estimated CCPs using different methods. The two red
dashed lines correspond to the relative location of the true CCPs. The grey
dashed line corresponds to the relative location of the true NCCP. When the
sample size is large, \texttt{Prune-BP} performs best in terms of both the
number of false positives and true positives while at a relatively small sample
size \texttt{LossCS-SeedBS-Prune} performs best.}
\label{fig:exp_set3}
\end{figure}

In summary, both families of \texttt{LossCS-*} and \texttt{Prune-*} can be
helpful when localizing CCPs. The \texttt{LossCS-*} methods can be used in
combination with many existing change point localization schemes. The
\texttt{Prune-*} methods come with the usual guarantees of a test, which may be
beneficial for small sample sizes (where it may be better to remain conservative
and not make any decision). If the set of candidates is incorrect in that it
does not contain all true CCPs, the output of these methods may be incorrect;
however, in some scenarios it may be possible to realize that, see Experiment~4
in Appendix~\ref{app:add_exp}.

\section{Discussion} \label{sec:discussion} 

We introduce a notion of causal changes which, under additional causal
assumptions, captures changes in the underlying causal mechanism, but which is
still meaningful without referencing an underlying causal model. We consider the
problems of detecting and localizing these changes in an offline sequential
setting. For detection, we propose a testing procedure that combines several
invariance tests. For localization, we propose two approaches based on pruning a
set of candidate CCPs and based on minimizing a loss function, respectively. The
first approach \texttt{Prune-*} is directly applicable if a superset of all CCPs
is known. If this is not the case, one can first estimate the RCPs using a
method that can localize both changes in the regression parameter and changes in
the residual distribution. If the candidates are imprecise, an NCCP may be
mistaken as a CCP in the final estimates. The second approach \texttt{LossCS-*}
avoids this problem by targeting the CCPs directly without estimating the RCPs
first.

We have considered a sequential setting in this work, and one future direction
is to take time dependencies into consideration. Moreover, we have focused on
linear regression models, but the idea can potentially be generalized to
semi-parametric or non-parametric regressions. In those settings, we may rely on
a notion of invariant functions rather than invariant sets. 

\section*{Acknowledgement}

SH and NP are supported by a research grant (0069071) from Novo Nordisk Fonden.
We thank Solt Kovács for the helpful discussions and Rikke Søndergaard for her
contribution in the initial stage of this work. Part of the work was done while
JP was at the University of Copenhagen.

\bibliographystyle{abbrvnat}
\bibliography{bibliography} 

\clearpage

\begin{appendices}
\appendix
\renewcommand{\thesection}{\Alph{section}}
\counterwithin{figure}{section}
\renewcommand\thefigure{\thesection\arabic{figure}}
\counterwithin{table}{section}
\renewcommand\thetable{\thesection\arabic{table}}

\begin{itemize}
\item Section~\ref{app:ex_details}: Additional examples and details on examples.
\item Section~\ref{app:alg}: Algorithms
\item Section~\ref{app:add_exp_and_det}: Additional numerical experiments and
experiment details
\item Section~\ref{app:proofs}: Proofs
\item Section~\ref{app:aux_res}: Auxiliary results
\end{itemize}

\section{Additional examples and details on examples} \label{app:ex_details}

\subsection{Details of Example~\ref{ex:lscm}}

For all $j\in\{X^1,X^2,X^3,Y\}$, denote the variance of $\epsilon_i^j$ as
$(\sigma_i^j)^2$. For all $i \in\{1,\ldots,n\}$, $\alpha = (0,0,1)^\top$; for
all $i\in I_1 \cup I_2$, $A_i = \mathbf{0}_{3\times 4}$ and $\beta_i =
(1,1,0,0)^\top$, and for all $i \in I_3$, $A_i = \begin{pmatrix} 0 & 0 & 0 & 0
\\
0 & 0 & 0 & 0\\
1 & 0 & 0 & 0\end{pmatrix}$ and $\beta_i = (0,1,0,0)^\top$. For all $i\in I_1$
and for all $j\in\{X^1,X^2,X^3,Y\}$, $\sigma_i^j = 1$, and for all $i \in
I_2\cup I_3$, $\sigma_i^{X^1} = \sigma_i^{Y} = 1$ and $\sigma_i^{X^2} =
\sigma_i^{X^3} = 2$. This means that the population OLS coefficient for all
$i\in I_1$ is $\beta^{\text{OLS}} = (0.5, 0.5, 0.5, 0)^\top$, for all $i\in I_2$
is $\beta^{\text{OLS}} = (0.8, 0.8, 0.2, 0)^\top$, and for all $i\in I_3$ is
$\beta^{\text{OLS}} = (-0.2, 0.8, 0.2, 0)^\top$.

\subsection{Example where the reverse implication of
Proposition~\ref{prop:ccp_under_seq_lscm_with_hidden} does not hold}
\label{app:ex_rev_fail}

For all $i\in\{1,\ldots,n\}$ consider an SCM over $(X_i, Y_i)$ given by $X_i^{3}
\coloneqq 1$ as intercept and 
\begin{align*}
  X_i^1 &\coloneqq \epsilon_i^{X^1}, \\
  Y_i &\coloneqq \beta_i X_i^1 + \epsilon_i^Y, \\
  X_i^2 &\coloneqq \alpha_i Y_i + \epsilon_i^{X^2},
\end{align*}
where $\epsilon_i^j \sim\cN\left(0,(\sigma_i^j)^2\right)$ for all $j\in\{X^1,
X^2, Y\}$. Let $k\in\{1,\ldots,n\}$ with $1 < k < n$. For all
$i\in\{1,\ldots,k-1\}$, $\beta_i = 2$, $\alpha_i = 3$, and for all $j\in\{X^1,
X^2, Y\}$, $\sigma_i^j = 1$. For all $i\in\{k,\ldots,n\}$, $\beta_i = 1$,
$\alpha_i = 8/3$, $\sigma_i^{X^1} = 1$, $\sigma_i^{X^2} = \sqrt{2}$, and
$\sigma_i^{Y} = 3\sqrt{2}/4$ . Then, for all $i\in\{1,\ldots,n\}$,
$\hat\beta_i^{\text{OLS}} = (0.2, 0.3,0)^\top$. Thus, $k$ is not a CCP even
though $\beta_k\neq\beta_{k-1}$ and $\epsilon_k^Y\dneq\epsilon_{k-1}^Y$.

\section{Algorithms} \label{app:alg}

The procedure of testing candidates described in Section~\ref{sec:ccpl_candi} is
given in Algorithm~\ref{alg:prune_candi}. Other algorithms mentioned in
Section~\ref{sec:ccpl}: the standard binary segmentation
\citep{vostrikova1981detecting} is given in Algorithm~\ref{alg:bin_seg}, the
construction of seeded intervals is given in
Algorithm~\ref{alg:seeded_intervals}, and the seeded binary segmentation is
given in Algorithm~\ref{alg:seededBS}.

\begin{algorithm}[!hbt]
\caption{Pruning: CCP localization given candidates} \label{alg:prune_candi}
\DontPrintSemicolon
\SetKwInput{KwIn}{input}
\SetKwInput{KwOut}{output}
\KwIn{data $(\bX, \bY)$, CCP candidates $\{k_1,\ldots,k_L\}$ with $k_i < k_j$ 
if $i < j$, and tests $(\varphi_I)_{I\in\cI}$ for $(\cH_0^I)_{I\in\cI}$}
\BlankLine
$k_0 \gets 1$; $k_{L+1} \gets n+1$; $\widehat\cT\gets \emptyset$\;
\For{$\ell \in \{1,\ldots,L\}$}{
  $I \gets \{k_{\ell-1},\ldots, k_{\ell+1}-1\}$\;
  \If{$\varphi_I(\bX_I,\bY_I) = 1$}{
    $\widehat\cT\gets \widehat\cT \cup \{k_\ell\}$\;
  }
}
\KwOut{$\widehat\cT$}
\end{algorithm}

\begin{algorithm}
\caption{Binary segmentation}
\label{alg:bin_seg}
\DontPrintSemicolon
\SetKwInput{KwTune}{\normalfont \textit{tuning}}
\SetKwFunction{FMain}{BinSeg}
\SetKwProg{Pn}{function}{:}{ }
\SetKwInput{KwIn}{input}
\SetKwInput{KwOut}{output}
\KwIn{data $(\bX, \bY)$, tests $(\varphi_I)_{I\in\cI}$ for
$(\cH_0^I)_{I\in\cI}$}
\KwTune{a minimal segmentation length $s$}
\Pn{\FMain{$\bX, \bY, b, e, s$}}{
  Let $I \coloneqq \{b,\ldots, e\}$ \;
  \eIf{$|I| > h$ and $\varphi_I = 1$}{
    $k \coloneqq \displaystyle\argmin_{i\in I} \widehat\cL_{I,s}(i)$
    \;
    $G \coloneqq$ \FMain{$\bX, \bY, b, k-1, s$} \; 
    $D \coloneqq$ \FMain{$\bX, \bY, k, e, s$} \; 
    \KwRet{$G\cup\{d\}\cup D$} \;
  }{
    \KwRet{$\emptyset$}\;
  }
}
Let $\widehat{\cD}\coloneqq$ \FMain{$\bX,\bY, 1, n, s$}. \;
\KwOut{$\widehat\cD$}
\end{algorithm}

\begin{algorithm}
\caption{Seeded intervals} \label{alg:seeded_intervals}
\DontPrintSemicolon
\SetKwInput{KwTune}{\normalfont \textit{tuning}}
\SetKwInput{KwIn}{input}
\SetKwInput{KwOut}{output}
\KwIn{Total number of time points $n$}
\KwTune{a decay number $a \in [1/2, 1)$, and a minimal segmentation length $s$}
Let $\mathfrak{L}_1 \coloneqq \{\{1,\ldots,n\}\}$ and $L \coloneqq \lfloor 1 +
\log_{\frac{1}{a}}\frac{n}{s}\rfloor$ \tcp*{$L$ is the number of levels}
\For{$\ell\in\{2,\ldots,L\}$}{
  Let $h_\ell \coloneqq na^{\ell-1}$, 
  $s_\ell \coloneqq \frac{n-h_\ell}{q_\ell-1}$, and $q_\ell \coloneqq 2\lceil\frac{1}{a^{\ell-1}}\rceil$ \;
  \For{$j \in \{1,\ldots, q_\ell\}$}{
    $I_j \coloneqq \big\{\lfloor(j-1)s_\ell\rfloor+1, \ldots,\lceil(j-1)s_\ell+h_\ell\rceil\big\}$
  }
  $\fL_\ell \coloneqq 
\displaystyle\big\{I_1,\ldots,I_{q_\ell}\big\}$ \tcp*{$\fL_\ell$ denotes level
$\ell$, which is a set of intervals}
}
\KwOut{$L$ levels of intervals $\fL_1,\cdots,\fL_L$}
\end{algorithm}

\begin{algorithm}
\caption{Seeded binary segmentation (with narrowest-over-threshold)}
\label{alg:seededBS}
\DontPrintSemicolon
\SetKwInput{KwTune}{\normalfont \textit{tuning}}
\SetKwInput{KwIn}{input}
\SetKwInput{KwOut}{output}
\KwIn{data $(\bX, \bY)$, tests $(\varphi_I)_{I\in\cI}$ for
$(\cH_0^I)_{I\in\cI}$, and levels of intervals $\fL_1,\ldots,\fL_L$ obtained from
Algorithm~\ref{alg:seeded_intervals}}
Let $\widehat\cT \coloneqq \emptyset$ and $i \coloneqq 0$\;
\While{$i < L$} {
  \eIf{there exists $I\in\fL_{L-i}$ such that $\varphi_I = 1$}{
    Let $I$ be the interval in $\fL_{L-i}$ that has the smallest p-value \;
    Let $k \coloneqq \argmin_{i\in I} \widehat\cL_{I,s}(i)$ and $\widehat\cT \coloneqq
\widehat\cT\cup\{k\}$ \;
    Update $\fL_1,\ldots,\fL_L$ by removing all intervals in all $\fL_1,\ldots,\fL_L$ that contain $k$
  }{
    Remove $\fL_i$
  }
  $i \coloneqq i + 1$
}
\KwOut{$\widehat\cT$}
\end{algorithm}

\section{Additional numerical experiments and experiment details}
\label{app:add_exp_and_det}

\subsection{Additional numerical experiments} \label{app:add_exp}

Further details on the following experiments can be found in
Appendix~\ref{app:exp_det}.

\paragraph{Experiment 4: Pruning when the candidate set is a subset of the true
CCPs} For a total number of time points $n$, two CCPs are fixed at $\lceil0.25 n
+ 1\rceil$ and $\lceil0.75 n +1\rceil$. We illustrate the performance of pruning
described in Section~\ref{sec:ccpl_candi} when the candidate set contains only
one of the two CCPs. Figure~\ref{fig:exp_set4} shows the power of detecting all
CCPs contained in the candidate set when the candidate set contains only the
first CCP ($k_1$), only the second CCP ($k_2$), or contains both CCPs ($k_1,
k_2$). In this setting, we observe that when the candidate set contains only one
of the two true CCPs, the detection of this CCP is not affected for a large
enough sample size.  

\begin{figure}[ht]
\centering
\includegraphics[width=0.6\textwidth]{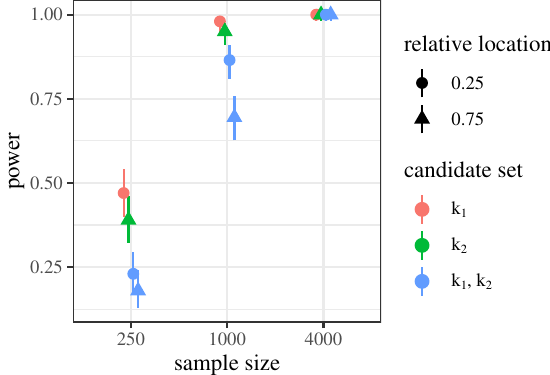}
\caption{Power of detecting all CCPs contained in the candidate set when the
candidate set contains only one true CCP or both of the true CCPs.}
\label{fig:exp_set4}
\end{figure}

\paragraph{Experiment 5: Pruning when the candidate deviates from the true CCP}
For $n = 4000$, one true CCP is fixed at $0.5 n$. We let the candidate equal
$(0.5 + \delta) n$ for different $\delta$. Figure~\ref{fig:exp_set5} shows (with
the marker `$\times$') the percentage of repetitions for which the method
(incorrectly for $\delta \neq 0$) outputs the candidate as a CCP, i.e., the test
that the candidate is not a CCP is rejected (with $\alpha = 0.05$). If the
method does output the candidate as a CCP, we test whether there exists an
invariant set over the sub-intervals to the left and right of the candidate. The
percentages of repetitions where the test is rejected in at least one of the
sub-intervals versus not rejected in both sub-intervals given different
$\delta$'s are also shown with bar charts in Figure~\ref{fig:exp_set5}. Even
though an inaccurate candidate can be output as an estimated CCP, it is possible
to invalidate the output by testing for the existence of invariant sets over the
sub-intervals divided by the candidate.

\begin{figure}[ht]
\centering
\includegraphics[width=0.8\textwidth]{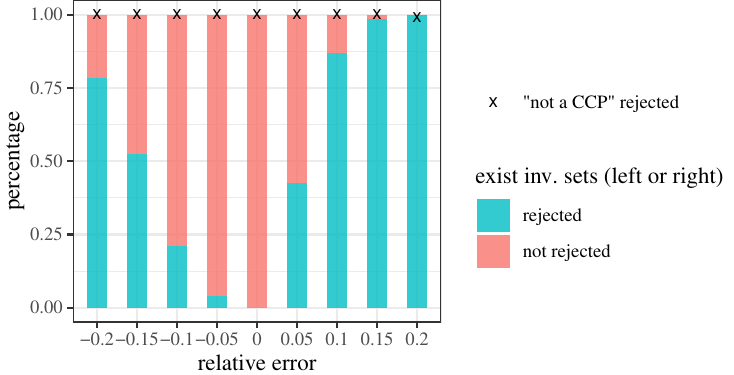}
\caption{When the candidate deviates from the true CCP by a relative error
$\delta \in \{-0.2,-0.15,\ldots,0.2\}$, the percentages of cases where `the
candidate is not a CCP' is rejected, are almost all at $1$ (when $\delta = 0$
the rejection is correct while when $\delta\neq 0$ the rejection is incorrect).
However, we can test whether there is an invariant set in the two sub-intervals
to the left or right of the candidate that is output as a CCP. If we reject that
there is an invariant set in one of the two sub-intervals to the left or right
of this candidate, it indicates that classifying the candidate as a CCP may be
incorrect.}
\label{fig:exp_set5}
\end{figure}

\subsection{Experiment details}  \label{app:exp_det}

For each of the experiments, we first choose the parameters of the SCMs
\eqref{eq:five_node_mb_graph} 
for all $i\in\{1,\ldots,n\}$, and then generate one dataset accordingly for each
of the $200$ repetitions. As before, the parameters are chosen such that changes
in the causal parameters are CCPs and changes in the non-causal parameters are
NCCPs (we have verified this using straight-forward computations). The specific
values of these parameters for each experiment in Section~\ref{sec:exp_ccpl} are
given below.

\begin{itemize}
\item \emph{CCP detection in Section~\ref{sec:exp_ccpd}}: Table~\ref{tab:param_exp_ccpd}.
\item \emph{CCP localization Experiment 1 in Section~\ref{sec:exp_ccpl}}:
Table~\ref{tab:param_exp_ccpl1}.
\item \emph{CCP localization Experiment 2 in Section~\ref{sec:exp_ccpl}}: Table~\ref{tab:param_exp_ccpl2}.
\item \emph{CCP localization Experiment 3 in Section~\ref{sec:exp_ccpl}}: Table~\ref{tab:param_exp_ccpl3}.
\item \emph{CCP localization Experiment 4 in Appendix~\ref{app:add_exp}}: Table~\ref{tab:param_exp_ccpl4}.
\item \emph{CCP localization Experiment 5 in Appendix~\ref{app:add_exp}}:
Table~\ref{tab:param_exp_ccpl5}.
\end{itemize}

\begin{table}[!htb]
\centering
\resizebox{\textwidth}{!}{
\begin{tabular}{rrrrrrrrrrrrrrrr}
\toprule
$i$ & $\mu_i^1$ & $\mu_i^2$ & $\mu_i^3$ & $\mu_i^4$ & $\mu_i^Y$ & $\sigma_i^1$ &
$\sigma_i^2$ & $\sigma_i^3$ & $\sigma_i^4$ & $\sigma_i^Y$ & $a_i^{12}$ &
$a_i^{53}$ & $a_i^{43}$ & $b_i^{15}$ & $b_i^{25}$ \\ 
\midrule
$1,\ldots, n\nu$ & 1.00 & 1.00 & 1.00 & 1.00 & 1.00 & 1.00 & 1.00 & 1.00 &
1.00 & 1.00 & 1.00 & 1.00 & 1.00 & 1.00 & 1.00 \\ 
$n\nu+1,\ldots, n$ & 1.00 & 1.00 & 1.00 & 1.00 & 1.00 & 1.00 & 1.00 & 1.00 &
1.00 & 1.00 & 1.00 & 1.00 & 1.00 & 2.00 & 2.00 \\ 
\bottomrule
\end{tabular}} \caption{Parameter values of CCP detection experiment in
Section~\ref{sec:exp_ccpd} where $\nu = \displaystyle\frac{\tau - 1}{n}$ is the
relative location of the single CCP, $\nu\in\{0.1,\ldots,0.9\}$.}
\label{tab:param_exp_ccpd}
\end{table}

\begin{table}[!htb]
\centering
\resizebox{\textwidth}{!}{
\begin{tabular}{rrrrrrrrrrrrrrrr}
\toprule
$i$ & $\mu_i^1$ & $\mu_i^2$ & $\mu_i^3$ & $\mu_i^4$ & $\mu_i^Y$ & $\sigma_i^1$ &
$\sigma_i^2$ & $\sigma_i^3$ & $\sigma_i^4$ & $\sigma_i^Y$ & $a_i^{12}$ &
$a_i^{53}$ & $a_i^{43}$ & $b_i^{15}$ & $b_i^{25}$ \\ 
\midrule
$1,\ldots,0.25 n$ & 1.00 & 1.00 & 1.00 & 1.00 & 1.00 & 1.00 & 1.00 & 1.00 & 1.00 & 1.00 & 1.00 & 1.00 & 1.00 & 1.00 & 1.00 \\ 
$0.25 n+1, \ldots, 0.5 n$ & 1.50 & 0.50 & 0.50 & 1.50 & 1.00 & 0.71 & 0.71 & 1.22 & 1.22 & 1.00 & 1.50 & 1.50 & 0.50 & 1.00 & 1.00 \\ 
$0.5 n + 1, \ldots, 0.75 n$ & 1.50 & 0.50 & 0.50 & 1.50 & 0.50 & 0.71 & 0.71 &
1.22 & 1.22 & 1.22 & 1.50 & 1.50 & 0.50 & 1.50 & 0.50 \\ 
$0.75 n + 1,\ldots, n$ & 0.75 & 0.75 & 0.25 & 0.75 & 0.50 & 0.50 & 0.50 & 1.50 & 0.87 & 1.22 & 2.25 & 0.75 & 0.25 & 1.50 & 0.50 \\ 
\bottomrule
\end{tabular}} \caption{Parameter values of CCP localization Experiment 1 in
Section~\ref{sec:exp_ccpl} where $0.5 n$ is the location of the single CCP, and
$0.25 n$ and $0.75 n$ are the locations of two NCCPs. The changes at the CCP and
the NCCPs are such that each causal (respectively, non-causal) parameter either
increase or decrease (with probability $0.5$) by $50\%$ of its value at the
previous time point (for $\sigma_i^j$ where $j\in\{1,2,3,Y\}$, the changes are
$50\%$ in their squared values).}
\label{tab:param_exp_ccpl1}
\end{table}

\begin{table}[!htb]
\centering
\resizebox{\textwidth}{!}{
\begin{tabular}{rrrrrrrrrrrrrrrr}
\toprule
$i$ & $\mu_i^1$ & $\mu_i^2$ & $\mu_i^3$ & $\mu_i^4$ & $\mu_i^Y$ & $\sigma_i^1$ &
$\sigma_i^2$ & $\sigma_i^3$ & $\sigma_i^4$ & $\sigma_i^Y$ & $a_i^{12}$ &
$a_i^{53}$ & $a_i^{43}$ & $b_i^{15}$ & $b_i^{25}$ \\ 
\midrule
$1,\ldots,200$ & 1.00 & 1.00 & 1.00 & 1.00 & 1.00 & 1.00 & 1.00 & 1.00 & 1.00 & 1.00 & 1.00 & 1.00 & 1.00 & 1.00 & 1.00 \\ 
$201,\ldots,500$ & 1.00 & 1.00 & 1.00 & 1.00 & 1.50 & 1.00 & 1.00 & 1.00 & 1.00 & 1.22 & 1.00 & 1.00 & 1.00 & 1.50 & 1.50 \\ 
$501,\ldots,1500$ & 1.50 & 1.50 & 1.50 & 1.50 & 1.50 & 1.22 & 1.22 & 1.22 & 1.22 & 1.22 & 1.50 & 1.50 & 1.50 & 1.50 & 1.50 \\ 
$1501,\ldots,2000$ & 2.25 & 2.25 & 2.25 & 2.25 & 1.50 & 1.50 & 1.50 & 1.50 &
1.50 & 1.22 & 2.25 & 2.25 & 2.25 & 1.50 & 1.50 \\
\bottomrule
\end{tabular}} \caption{An example of the parameter values of CCP localization
experiment in Section~\ref{sec:exp_ccpl} with $\nu = 0.1$. In this experiment,
$n=2000$, two fixed NCCPs are located at $501$ and $1501$, and one CCP is
located at $\tau = n\nu + 1$. The changes at the CCP and the NCCPs are such that
each causal parameter (respectively, non-causal) increases by $50\%$ of its
value at the previous time point (for $\sigma_i^j$ where $j\in\{1,2,3,Y\}$, the
changes are $50\%$ in their squared values). The parameter values for $\nu
\in\{0.2, \ldots, 0.9\}$ are constructed in the same way.}
\label{tab:param_exp_ccpl2}
\end{table}

\begin{table}[!htb]
\centering
\resizebox{\textwidth}{!}{
\begin{tabular}{rrrrrrrrrrrrrrrr}
\toprule
$i$ & $\mu_i^1$ & $\mu_i^2$ & $\mu_i^3$ & $\mu_i^4$ & $\mu_i^Y$ & $\sigma_i^1$ &
$\sigma_i^2$ & $\sigma_i^3$ & $\sigma_i^4$ & $\sigma_i^Y$ & $a_i^{12}$ &
$a_i^{53}$ & $a_i^{43}$ & $b_i^{15}$ & $b_i^{25}$ \\ 
\midrule
$1,\ldots,0.25 n$ & 1.00 & 1.00 & 1.00 & 1.00 & 1.00 & 1.00 & 1.00 & 1.00 & 1.00
& 1.00 & 1.00 & 1.00 & 1.00 & 1.00 & 1.00 \\ 
$0.25 n+1,\ldots,0.5 n$ & 1.00 & 1.00 & 1.00 & 1.00 & 1.00 & 1.00 & 1.00 & 1.00
& 1.00 & 1.00 & 1.00 & 1.00 & 1.00 & 1.00 & 0.00 \\ 
$0.5 n+1,\ldots, 0.75 n$ & 1.00 & 1.00 & 2.00 & 2.00 & 1.00 & 1.00 & 1.00 & 1.00 & 1.00 & 1.00 & 1.00 & 2.00 & 1.00 & 1.00 & 0.00 \\ 
$0.75 n+1,\ldots, n$ & 1.00 & 1.00 & 2.00 & 2.00 & 1.00 & 1.00 & 1.00 & 1.00 &
1.00 & 2.00 & 1.00 & 2.00 & 1.00 & 0.00 & 1.00 \\ 
\bottomrule
\end{tabular}} \caption{Parameter values of CCP localization experiment
Experiment 1 in Section~\ref{sec:exp_ccpl} where $0.5 n$ is the location of the
single CCP, and $0.25 n$ and $0.75 n$ are the locations of two NCCPs.}
\label{tab:param_exp_ccpl3}
\end{table}

\begin{table}[!htb]
\centering
\resizebox{\textwidth}{!}{
\begin{tabular}{rrrrrrrrrrrrrrrr}
\toprule
$i$ & $\mu_i^1$ & $\mu_i^2$ & $\mu_i^3$ & $\mu_i^4$ & $\mu_i^Y$ & $\sigma_i^1$ &
$\sigma_i^2$ & $\sigma_i^3$ & $\sigma_i^4$ & $\sigma_i^Y$ & $a_i^{12}$ &
$a_i^{53}$ & $a_i^{43}$ & $b_i^{15}$ & $b_i^{25}$ \\ 
\midrule
$1,\ldots,0.25 n$ & 1.00 & 1.00 & 1.00 & 1.00 & 1.00 & 1.00 & 1.00 & 1.00 & 1.00
& 1.00 & 1.00 & 1.00 & 1.00 & 1.00 & 1.00 \\ 
$0.25 n+1,\ldots,0.75 n$ & 1.00 & 1.00 & 1.00 & 1.00 & 1.50 & 1.00 & 1.00 & 1.00
& 1.00 & 1.22 & 1.00 & 1.00 & 1.00 & 1.50 & 1.50 \\ 
$0.75 n+1,\ldots, n$ & 1.00 & 1.00 & 1.00 & 1.00 & 2.25 & 1.00 & 1.00 & 1.00 &
1.00 & 1.50 & 1.00 & 1.00 & 1.00 & 2.25 & 2.25 \\ 
\bottomrule
\end{tabular}} \caption{Parameter values of CCP localization experiment Experiment 4 in
Appendix~\ref{app:add_exp} where $0.25 n$ and $0.75 n$ are the locations of two
CCPs. The changes at the CCP are such that each causal parameter
increases by $50\%$ of its value at the previous time point (for $\sigma_i^Y$
the changes are $50\%$ in its squared value).}
\label{tab:param_exp_ccpl4}
\end{table}

\begin{table}[!htb]
\centering
\resizebox{\textwidth}{!}{
\begin{tabular}{rrrrrrrrrrrrrrrr}
\toprule
$i$ & $\mu_i^1$ & $\mu_i^2$ & $\mu_i^3$ & $\mu_i^4$ & $\mu_i^Y$ & $\sigma_i^1$ &
$\sigma_i^2$ & $\sigma_i^3$ & $\sigma_i^4$ & $\sigma_i^Y$ & $a_i^{12}$ &
$a_i^{53}$ & $a_i^{43}$ & $b_i^{15}$ & $b_i^{25}$ \\ 
\midrule
$1,\ldots, 2000$ & 1.00 & 1.00 & 1.00 & 1.00 & 1.00 & 1.00 & 1.00 & 1.00 & 1.00 & 1.00 & 1.00 & 1.00 & 1.00 & 1.00 & 1.00 \\ 
$2000,\ldots,4000$ & 1.00 & 1.00 & 1.00 & 1.00 & 1.50 & 1.00 & 1.00 & 1.00 &
1.00 & 1.22 & 1.00 & 1.00 & 1.00 & 1.50 & 1.50 \\   
\bottomrule
\end{tabular}} \caption{Parameter values of CCP localization experiment
Experiment 5 in Appendix~\ref{app:add_exp} where $2000$ is the location of the
single CCP. The changes at the CCP are such that each causal parameter
increases by $50\%$ of its value at the previous time point (for $\sigma_i^Y$
the change is $50\%$ in its squared value).}
\label{tab:param_exp_ccpl5}
\end{table}

\FloatBarrier

\section{Proofs} \label{app:proofs}

\subsection{Proof of Proposition~\ref{prop:ccp_alt_char}}

\begin{proof}
We show both directions separately.
    
($\Rightarrow$) Assume $k$ is a CCP and fix an arbitrary $S \in \cS$. Then,
since $k$ is a CCP it holds that either
$\beta_k^{\operatorname{OLS}}(S)\neq\beta_{k-1}^{\operatorname{OLS}}(S)$ or
$\epsilon_{k}(S) \dneq \epsilon_{k-1}(S)$. This implies, that $S$ is not a
$\{k-1,k\}$-invariant set. Since $S$ was arbitrary this implies the result.

($\Leftarrow$) Assume there does not exist a $\{k-1,k\}$-invariant set $S
\in\cS$. Then, it holds for all $S\in\cS$ that either
$\beta_k^{\operatorname{OLS}}(S)\neq\beta_{k-1}^{\operatorname{OLS}}(S)$ or
$\epsilon_{k}(S) \dneq \epsilon_{k-1}(S)$, which implies that $k$ is a CCP.

This completes the proof of Proposition~\ref{prop:ccp_alt_char}.
\end{proof}

\subsection{Proof of Proposition~\ref{prop:ccp_under_seq_lscm_with_hidden}}

\begin{proof}
We begin by connecting the causal coefficient and residual with the
corresponding population OLS coefficient and residual given the parents of $Y$.
By the definitions of the causal and population OLS coefficients it holds for
all $i\in\{k-1,k\}$ and all
$j\in\{1,\ldots,d+1\}\setminus\operatorname{PA}(Y_i)$, that
$(\beta_i^{\text{OLS}}(\operatorname{PA}(Y_{i})))^j = \beta_i^j = 0$ and
\begin{align}
\Big(\beta_i^{\text{OLS}}(\operatorname{PA}(Y_{i}))\Big)^{\operatorname{PA}(Y_i)}
&= \bbE\Big[X_i^{\operatorname{PA}(Y_{i})}(X_i^{\operatorname{PA}(Y_{i})})^\top\Big]^{-1}\bbE\Big[X_i^{\operatorname{PA}(Y_{i})}Y_i\Big]\nonumber\\
&= \bbE\Big[X_i^{\operatorname{PA}(Y_{i})}(X_i^{\operatorname{PA}(Y_{i})})^\top\Big]^{-1}\bbE\Big[X_i^{\operatorname{PA}(Y_{i})}(X_i^{\operatorname{PA}(Y_{i})})^\top\beta_i^{\operatorname{PA}(Y_{i})} \nonumber \\ 
&\qquad\qquad\qquad\qquad\qquad\qquad + X_i^{\operatorname{PA}(Y_{i})}g_i(H_i, \epsilon_i^Y)\Big] \nonumber\\
&= \beta_i^{\operatorname{PA}(Y_{i})} + \bbE\Big[X_i^{\operatorname{PA}(Y_{i})}(X_i^{\operatorname{PA}(Y_{i})})^\top\Big]^{-1}\bbE\Big[X_i^{\operatorname{PA}(Y_{i})}g_i(H_i, \epsilon_i^Y)\Big] \nonumber\\
&= \beta_i^{\operatorname{PA}(Y_{i})},
\end{align}
where the last equality follows from the assumption that for all
$i\in\{k-1,k\}$, $\bbE[X_i^{\operatorname{PA}(Y_{i})}g_i(H_i, \epsilon_i^Y)] =
0$. Thus, we get for all $i\in\{k-1,k\}$ that
\begin{equation}
\label{eq:ols_causal_coefficients}
\beta_i^{\text{OLS}}(\operatorname{PA}(Y_i)) = \beta_i.
\end{equation}
Moreover, using this result and the definition of the population OLS residual
given $\operatorname{PA}(Y_i)$ we also obtain that
\begin{align}
\epsilon_i(\operatorname{PA}(Y_{i}))
&= Y_i-X_i^{\top}\beta_i^{\text{OLS}}(\operatorname{PA}(Y_{i}))\nonumber\\
&= Y_i-X_i^{\top}\beta_i\nonumber\\
&= g_i(H_i,\epsilon_i^Y).\label{eq:ols_causal_residuals}
\end{align}

Now, we can prove the result. To this end, assume $k\in\{2,\ldots,n\}$ satisfies
that $\beta_k=\beta_{k-1}$ and $g_k(H_k,\epsilon_k^Y)\deq
g_{k-1}(H_{k-1},\epsilon_{k-1}^Y)$. This implies that
$\operatorname{PA}(Y_{k-1})=\operatorname{PA}(Y_{k})$ and hence, using
\eqref{eq:ols_causal_coefficients} and \eqref{eq:ols_causal_residuals}, the set
$\operatorname{PA}(Y_{k})$ is $\{k-1, k\}$-invariant. Therefore, by
Proposition~\ref{prop:ccp_alt_char}, $k$ is not a CCP. This completes the proof
of Proposition~\ref{prop:ccp_under_seq_lscm_with_hidden}.
\end{proof}

\subsection{Proof of Proposition~\ref{prop:correct_level}}

\begin{proof}
Suppose $(\bX_I, \bY_I) \sim P \in \cH_0^I$ and let $S\in\cS$ be s.t.\ $S$ is
$I$-invariant. We then have
\begin{align*}
\bbP_P(\varphi_I = 1) \leq \bbP_P(\varphi_I^{S} = 1) 
\leq \alpha.
\end{align*}
\end{proof}

\subsection{Proof of Proposition~\ref{prop:candi_prop}}

\begin{proof}
For (i), assume that for all $k\in\cK$ the test $\varphi_{I_k}$ is level
$\alpha$. Then, using a union bound we get
\begin{align*}
\mathbb{P}(\widehat\cT\subseteq\cT)
&=1-\mathbb{P}(\exists k\in\widehat\cT:\, k\not\in\cT) \\
&\geq 1-\sum_{k\in\cK\setminus\cT}\bbP(k\in\widehat\cT) \\
&\geq 1-\sum_{k\in\cK\setminus\cT}\bbP(\varphi_{I_k}=1) \\
&\geq 1-(|\cK| - |\cT|)\cdot\alpha,
\end{align*}
where in the last step we used that $\cH^{I_k}_0$ is true for all
$k\in\cK\setminus\cT$. 

Similarly, for (ii), assume that for all $\ell\in\{1,\dots,L\}$ with
$k_{\ell}\in\cT$ it holds that $\mathbb{P}(\varphi_{I_{k_{\ell}}}=1)\leq\beta$,
then
\begin{align*}
  \mathbb{P}(\widehat\cT\supseteq\cT)
  &=1-\mathbb{P}(\exists k\in\cT:\, k\not\in\widehat\cT) \\
  &\geq 1-\sum_{k\in\cT}\mathbb{P}(k\not\in\widehat\cT) \\
  &= 1-\sum_{k\in\cT}\mathbb{P}(\varphi_{I_{k}}=0) \\    
  &\geq 1-|\cT|\cdot(1-\beta).
\end{align*}

This concludes the proof. 
\end{proof}

\subsection{Proof of Proposition~\ref{prop:min_ols_ins_wo_ccp}} 

\begin{proof}
Suppose there is no CCP in $I$. Then, there exists a set $S\in\cS$, a parameter
$\beta\in\bbR^{d+1}$ and a distribution $F_\epsilon$ such that for all $i\in I$
it holds that $\beta_i^{\text{OLS}}(S) = \beta$ and $\epsilon_i(S) \iid
F_\epsilon$. Moreover, by Lemma~\ref{lem:ols_reduce} it holds for all
$J\subseteq I$ and all $i\in I$ that $\epsilon_i^J(S)\iid F_\epsilon$. Hence, we
get for all $J, J'\subseteq I$ that
\begin{align*}
  V_{J',J}(S)
  &= \frac{1}{|J'|}\sum_{\ell\in
    J'}\bbE\Big[\epsilon_\ell^J(S)^2\Big] \\
  &= \frac{1}{|J'|}\sum_{\ell\in J'}\bbE_{\nu\sim F_\epsilon}\Big[\nu^2\Big] \\
  &=\bbE_{\nu\sim F_\epsilon}\Big[\nu^2\Big].
\end{align*}
Therefore, for $s\in\bbN$, we get
\begin{align*}
  \cC_s(I)
  &= \min_{\tilde{S}\in\cS} \sum_{r=1}^{m_s(I)} \Big(V_{P_r^c(I), P_r(I)}(\tilde{S}) - V_{P_r(I), P_r(I)}(\tilde{S})\Big)^2 \\ 
  &\leq \sum_{r=1}^{m_s(I)} \Big(V_{P_r^c(I), P_r(I)}(S) - V_{P_r(I), P_r(I)}(S)\Big)^2 \\
  &= \sum_{r=1}^{m_s(I)} \left(\bbE_{\nu\sim F_\epsilon}\Big[\nu^2\Big]
    - \bbE_{\nu\sim F_\epsilon}\Big[\nu^2\Big]\right)  \\
  &= 0.
\end{align*}
Now, since $\cC_s(I) \geq 0$, we have that $\cC_s(I) = 0$. This completes the
proof of Proposition~\ref{prop:min_ols_ins_wo_ccp}. 
\end{proof}

\section{Auxiliary results} \label{app:aux_res}

\begin{lemma} \label{lem:ols_reduce} Let $I \in \cI$ and $S\in\cS$ is an
$I$-invariant set. Then, for all $J\subseteq I$ and all $i\in I$, it holds that
$\beta_J^{\text{OLS}}(S) = \beta_i^{\text{OLS}}(S)$, and that there exists a
distribution $F_\epsilon$ such that $\epsilon_i^J(S) = \epsilon_i(S) \iid
F_\epsilon$.
\end{lemma}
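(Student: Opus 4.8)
The plan is to unwind the definitions of $I$-invariance (Definition~\ref{def:invsets}), of the pointwise population OLS given $S$ (Definition~\ref{def:pop_ols}), and of the interval population OLS given $S$ (Definition~\ref{def:pop_ols_interval}), and then show both claims fall out by direct substitution. First I would extract from the assumption that $S$ is $I$-invariant a common vector $\beta \in \bbR^{d+1}$ and a distribution $F$ with $\beta_i^{\operatorname{OLS}}(S) = \beta$ and $\epsilon_i(S) \iid F$ for all $i \in I$. Reading off the defining identity for $\beta_i^{\operatorname{OLS}}(S)$ in Definition~\ref{def:pop_ols}, the equality $\beta_i^{\operatorname{OLS}}(S)^S = \beta^S$ is equivalent to the moment equation
\[
\bbE[X_i^S Y_i] = \bbE\big[X_i^S (X_i^S)^\top\big]\,\beta^S
\quad\text{for all } i \in I.
\]

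Next I would establish the coefficient claim. For any $J \subseteq I$, the matrix $\sum_{i\in J}\bbE[X_i^S (X_i^S)^\top]$ is a sum of positive definite matrices (each $\bbE[X_i^S(X_i^S)^\top]$ is a principal submatrix of the invertible $\bbE[X_iX_i^\top]$, hence positive definite), so it is itself invertible and Definition~\ref{def:pop_ols_interval} applies. Summing the moment equation over $i \in J$ and left-multiplying by this inverse gives
\[
\big(\beta_J^{\operatorname{OLS}}(S)\big)^S
= \Big[\sum_{i\in J}\bbE[X_i^S (X_i^S)^\top]\Big]^{-1}
\sum_{i\in J}\bbE[X_i^S (X_i^S)^\top]\,\beta^S
= \beta^S.
\]
Since both $\beta_J^{\operatorname{OLS}}(S)$ and $\beta$ vanish on the coordinates outside $S$, this yields $\beta_J^{\operatorname{OLS}}(S) = \beta = \beta_i^{\operatorname{OLS}}(S)$ for every $i \in I$, which is the first assertion.

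Finally, for the residual claim, I would substitute this common value into the definition of the interval residual: for every $i \in I$,
\[
\epsilon_i^J(S) = Y_i - X_i^\top \beta_J^{\operatorname{OLS}}(S)
= Y_i - X_i^\top \beta
= Y_i - X_i^\top \beta_i^{\operatorname{OLS}}(S)
= \epsilon_i(S).
\]
Taking $F_\epsilon \coloneqq F$ and invoking the i.i.d.\ property guaranteed by $I$-invariance then gives $\epsilon_i^J(S) = \epsilon_i(S) \iid F_\epsilon$, completing the argument. The proof is essentially bookkeeping; the only points requiring care are the invertibility of the aggregated Gram matrix $\sum_{i\in J}\bbE[X_i^S(X_i^S)^\top]$ and the consistent use of the convention that all these coefficient vectors are zero off $S$, so I would flag those explicitly rather than treat them as genuine obstacles.
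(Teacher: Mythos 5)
Your proposal is correct and follows essentially the same route as the paper's proof: extract the common $\beta$ and $F_\epsilon$ from $I$-invariance, rewrite the pointwise OLS identity as the moment equation $\bbE[X_i^SY_i]=\bbE[X_i^S(X_i^S)^\top]\beta^S$, sum over $J$ and invert to get $\beta_J^{\operatorname{OLS}}(S)=\beta$, then substitute into the residual definition. The only difference is that you explicitly justify invertibility of the aggregated Gram matrix via positive definiteness of the principal submatrices, a point the paper leaves implicit.
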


\begin{proof}
Since $S$ is $I$-invariant, there exists $\beta\in\bbR^{d+1}$ and distribution
$F_{\epsilon}$ on $\bbR$ such that for all $i\in I$ it holds that
$\beta_i^{\text{OLS}}(S) = \beta$ and $\epsilon_i(S)\iid F_\epsilon$. Moreover,
since the population OLS coefficient satisfies for all $i\in I$ that
$\bbE\left[X_i^SY_i\right]=\bbE\left[X_i^S(X_i^S)^{\top}\right]\beta_i^{\text{OLS}}(S)^S$,
it immediately follows for all $J\subseteq I$ that
\begin{align*}
  \Big(\beta_J^{\text{OLS}}(S)\Big)^S
  &= 
    \Big[\sum_{\ell\in
    J}\bbE[X_\ell^S(X_\ell^S)^\top]\Big]^{-1}\sum_{\ell\in
    J}\bbE[X_\ell^SY_\ell] \\
  &= 
    \Big[\sum_{\ell\in
    J}\bbE[X_\ell^S(X_\ell^S)^\top]\Big]^{-1}\sum_{\ell\in
    J}\bbE\left[X_\ell^S(X_\ell^S)^{\top}\right]\beta_\ell^{\text{OLS}}(S)^S
  \\
  &= 
    \Big[\sum_{\ell\in
    J}\bbE[X_\ell^S(X_\ell^S)^\top]\Big]^{-1}\sum_{\ell\in
    J}\bbE\left[X_\ell^S(X_\ell^S)^{\top}\right]\beta^S \\
  &=\beta^S.
\end{align*}
Since, for all $j\in\{1,\ldots,d+1\}\setminus S$ it also holds that
$\Big(\beta_J^{\text{OLS}}(S)\Big)^j=\beta^j=0$, we get that
$\beta_J^{\text{OLS}}(S)=\beta$. Moreover, this further implies for all $i\in I$
that
\begin{equation*}
  \epsilon_i^J(S) =
  Y_i-X_i^\top\beta_J^{\text{OLS}}(S) =
  Y_i-X_i^\top\beta =  Y_i-X_i^\top\beta_i^{\text{OLS}}(S) =
  \epsilon_i(S) \iid F_\epsilon,
\end{equation*}
this completes the proof of Lemma~\ref{lem:ols_reduce}.
\end{proof}

\section{Chow test} \label{app:chow_test}

Here we review the Chow test \citep{chow1960tests} where we adapted the setup
and notations to this paper.

Let $I = \{t_1,\ldots,t_l\} \in\mathcal{I}$ where $l > d$, and consider an
arbitrary $k \in I$, $I_k^1 \coloneqq \{t_1,\ldots,k\}$, and $I_k^2 \coloneqq
\{k+1,\ldots,t_l\}$ be the two non-overlapping subsets splitting $I$ at $k$.
Denote $l_1 \coloneqq |I_k^1|$ and $l_2 \coloneqq l - l_1$.

Assume that $l_1 > d$, and for all $m \in \{1,2\}$ and $i\in I_k^m$,
\begin{equation}
Y_i = X_i^S\beta_{m} + \epsilon_i \quad \text{and} \quad \bbE[\epsilon_i \mid
X_i^S] = 0
\end{equation}
with $\epsilon_i \iid \cN(0,\sigma^2)$. Then, the null hypothesis
\begin{equation}\label{hypo:chow}
\cH_0^S(I, k): \beta_1 = \beta_2
\end{equation} 
holds if \eqref{hypo:s_inv} holds. The Chow test \citep{chow1960tests} described
below can be used for testing~\eqref{hypo:chow}.

\begin{proposition}[Chow test] Let $\hat\beta_{I_k^1} \coloneqq
(\bX_{I_k^1}^{\top}\bX_{I_k^1})^{-1}\bX_{I_k^1}\bY_{I_k^1}$, $\hat\beta_{I_k^2}
\coloneqq (\bX_{I_k^2}^{\top}\bX_{I_k^2})^{-1}\bX_{I_k^2}\bY_{I_k^2}$, and
$\hat\beta_{I} \coloneqq (\bX_{I}^{\top}\bX_{I})^{-1}\bX_{I}\bY_{I}$. Denote the
residuals by $R_{I_k^1} \coloneqq \bY_{I_k^1} - \bX_{I_k^1}\hat\beta_{I_k^1}$
and $R_{I_k^2} \coloneqq \bY_{I_k^2} - \bX_{I_k^2}\hat\beta_{I_k^2}$. Then,
under the null hypothesis $\cH_0^S(I,k)$ the following two statements hold;
\begin{itemize}
\item if $l_2 > d$, 
\begin{equation} \label{eq:chow_eq29}
\frac{||\bX_{I_k^1}\hat\beta_{I_k^1} - \bX_{I_k^2}\hat\beta_{I}||^2 + ||\bX_{I_k^2}\hat\beta_{I_k^2} - \bX_{I_k^2}\hat\beta_{I}||^2}{||R_{I_k^1}||^2 + ||R_{I_k^2}||^2} \cdot \frac{l-2d}{d} \sim F(d, l-2d)
\end{equation}
\item if $l_2 \leq d$,
\begin{equation} \label{eq:chow_eq31}
\begin{split} 
&\frac{(\bY_{I_k^2} - \bX_{I_k^2}\hat\beta_{I_k^1})^\top[\bI_{l_2} + \bX_{I_k^2}(\bX_{I_k^1}^\top\bX_{I_k^1})^{-1}\bX_{I_k^2}^\top]^{-1}(\bY_{I_k^2} - \bX_{I_k^2}\hat\beta_{I_k^1})}{||R_{I_k^1}||^2 } \cdot \frac{l_1-d}{l_2} \\
=&\frac{||\bX_{I_k^1}\hat\beta_{I_k^1} - \bX_{I_k^2}\hat\beta_{I}||^2 + ||\bY_{I_k^2} - \bX_{I_k^2}\hat\beta_{I}||^2}{||R_{I_k^1}||^2 } \cdot \frac{l_1-d}{l_2} \\
\sim& F(l_2, l_1-d),
\end{split}
\end{equation}
where $\bI_l$ denotes the identity matrix of dimension $l$.

\end{itemize}
\end{proposition}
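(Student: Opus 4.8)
The plan is to reduce both displayed statements to the ratio of two independent, appropriately scaled $\chi^2$ random variables and then read off the $F$-distribution from its definition. Under $\cH_0^S(I,k)$ both segments obey the same linear model, so I would write $\bY_{I_k^m}=\bX_{I_k^m}\beta+\epsilon_{I_k^m}$ for $m\in\{1,2\}$ with a common $\beta$ and $\epsilon_{I_k^1},\epsilon_{I_k^2}$ independent $\cN(0,\sigma^2\bI)$ vectors. The single fact I would use repeatedly is that each OLS residual is orthogonal to its own fitted column space, i.e. $\bX_{I_k^m}^\top R_{I_k^m}=0$ and $\bX_I^\top(\bY_I-\bX_I\hat\beta_I)=0$; this is what makes all cross terms below vanish.

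For the case $l_2>d$ I would stack the two segments and compare two nested linear models on $\bbR^l$: the \emph{unrestricted} model with block-diagonal design (column space of dimension $2d$) and the \emph{restricted} model with design $\bX_I$ (a $d$-dimensional subspace of the unrestricted column space, obtained by setting $\beta_1=\beta_2$). Writing $P_U$ and $P_R$ for the corresponding orthogonal projections, one has $P_RP_U=P_UP_R=P_R$, and the decomposition by orthogonality gives $\|R_{I_k^1}\|^2+\|R_{I_k^2}\|^2=\|(\bI-P_U)\bY\|^2$ together with the identity that the numerator equals the extra residual sum of squares $\|\bY_I-\bX_I\hat\beta_I\|^2-\|R_{I_k^1}\|^2-\|R_{I_k^2}\|^2=\|(P_U-P_R)\bY\|^2$. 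Since under the null the mean $\bX_I\beta$ lies in $\mathrm{col}(P_R)\subseteq\mathrm{col}(P_U)$, both quadratic forms depend on $\bY$ only through the noise. The matrices $P_U-P_R$ and $\bI-P_U$ are orthogonal projections of ranks $d$ and $l-2d$ with mutually orthogonal ranges, so Cochran's theorem gives two \emph{independent} variables $\|(P_U-P_R)\epsilon\|^2/\sigma^2\sim\chi^2_d$ and $\|(\bI-P_U)\epsilon\|^2/\sigma^2\sim\chi^2_{l-2d}$; forming their ratio and multiplying by $(l-2d)/d$ yields exactly \eqref{eq:chow_eq29} $\sim F(d,l-2d)$.

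For the case $l_2\le d$ the second segment no longer identifies the unrestricted model, so I would switch to a predictive statistic built from $\hat\beta_{I_k^1}$ alone. Setting $u\coloneqq \bY_{I_k^2}-\bX_{I_k^2}\hat\beta_{I_k^1}$ and writing $A\coloneqq\bX_{I_k^1}^\top\bX_{I_k^1}$, $B\coloneqq\bX_{I_k^2}^\top\bX_{I_k^2}$, under the null $u=\epsilon_{I_k^2}-\bX_{I_k^2}(\hat\beta_{I_k^1}-\beta)$ with $\hat\beta_{I_k^1}-\beta\sim\cN(0,\sigma^2A^{-1})$ independent of $\epsilon_{I_k^2}$, so $u\sim\cN(0,\sigma^2 M)$ with $M=\bI_{l_2}+\bX_{I_k^2}A^{-1}\bX_{I_k^2}^\top$. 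Hence $u^\top M^{-1}u/\sigma^2\sim\chi^2_{l_2}$, and because $\hat\beta_{I_k^1}$ is independent of $R_{I_k^1}$ in the Gaussian linear model while $\epsilon_{I_k^2}$ is independent of everything on the first segment, this numerator is independent of $\|R_{I_k^1}\|^2/\sigma^2\sim\chi^2_{l_1-d}$; their scaled ratio is \eqref{eq:chow_eq31} $\sim F(l_2,l_1-d)$. The claimed equality in \eqref{eq:chow_eq31} I would obtain by expanding $\hat\beta_I-\hat\beta_{I_k^1}=(A+B)^{-1}\bX_{I_k^2}^\top u$ and simplifying $\|\bY_I-\bX_I\hat\beta_I\|^2-\|R_{I_k^1}\|^2=u^\top\bigl(\bI_{l_2}-\bX_{I_k^2}(A+B)^{-1}\bX_{I_k^2}^\top\bigr)u$, which equals $u^\top M^{-1}u$ by the Woodbury matrix-inversion lemma.

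The routine parts are the orthogonality cancellations and the bookkeeping of degrees of freedom. I expect the main obstacle to be the case $l_2\le d$: correctly identifying the covariance $\sigma^2M$ of the prediction error, verifying that the quadratic form in $(\hat\beta_{I_k^1},\epsilon_{I_k^2})$ is genuinely independent of $\|R_{I_k^1}\|^2$, and checking the Woodbury identity $\bI_{l_2}-\bX_{I_k^2}(A+B)^{-1}\bX_{I_k^2}^\top=M^{-1}$ that turns the fitted-value expression into the inverse-covariance quadratic form.
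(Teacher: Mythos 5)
Your proposal is correct, but it is worth noting that the paper does not actually prove this proposition: its ``proof'' consists entirely of the citation ``See \citet{chow1960tests}''. You have therefore supplied a genuine, self-contained derivation where the paper defers to the original reference. Your argument is the standard modern route and it checks out in both cases: for $l_2>d$, the nested-projection decomposition with $P_R P_U=P_U P_R=P_R$, the identity that the numerator equals $\|(P_U-P_R)\bY\|^2=\mathrm{RSS}_R-\mathrm{RSS}_U$, and Cochran's theorem giving independent $\chi^2_d$ and $\chi^2_{l-2d}$ variables; for $l_2\le d$, the prediction error $u\sim\cN(0,\sigma^2 M)$ with $M=\bI_{l_2}+\bX_{I_k^2}A^{-1}\bX_{I_k^2}^\top$, independence of $u$ from $\|R_{I_k^1}\|^2$, and the Woodbury identity $M^{-1}=\bI_{l_2}-\bX_{I_k^2}(A+B)^{-1}\bX_{I_k^2}^\top$ establishing the claimed equality of the two forms of the statistic (I verified the intermediate step $\hat\beta_I=\hat\beta_{I_k^1}+(A+B)^{-1}\bX_{I_k^2}^\top u$ and the resulting $\mathrm{RSS}_R-\|R_{I_k^1}\|^2=u^\top(\bI_{l_2}-\bX_{I_k^2}(A+B)^{-1}\bX_{I_k^2}^\top)u$). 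One small point: your decomposition implicitly reads the first norm in each display as $\|\bX_{I_k^1}\hat\beta_{I_k^1}-\bX_{I_k^1}\hat\beta_I\|^2$ rather than the dimensionally inconsistent $\|\bX_{I_k^1}\hat\beta_{I_k^1}-\bX_{I_k^2}\hat\beta_I\|^2$ printed in the statement; that is clearly a typo in the proposition, and your reading is the correct one, but you should flag the correction explicitly.
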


\begin{proof}

See \citet{chow1960tests}.

\end{proof}

\end{appendices}

\clearpage

\end{document}